\documentclass{amsproc}

\usepackage{amssymb}


\newtheorem{theorem}{Theorem}[section]
\newtheorem{lemma}[theorem]{Lemma}
\newtheorem{prop}[theorem]{Proposition}
\newtheorem{coro}[theorem]{Corollary}

\theoremstyle{remark}
\newtheorem{remark}[theorem]{Remark}

\newcommand{\N} {\ensuremath{\mathbb{N}}}
\newcommand{\F} {\ensuremath{\mathbb{F}}}

\newcommand{\G} {\textnormal{\textrm{G}}}
\newcommand{\PP} {\textnormal{\textrm{P}}}
\newcommand{\Aut} {\textnormal{\textrm{Aut}}}
\newcommand{\VV} {\mathcal{V}}
\newcommand{\CC} {\mathcal{C}}
\newcommand{\EE} {\mathcal{E}}
\newcommand{\id} {\textnormal{\textrm{id}}}

\newcommand{\soc} {\textnormal{\textrm{soc}}}
\newcommand{\sss} {\textnormal{\textrm{s}}}

\numberwithin{equation}{section}

\begin{document}

\title[Automorphism groups of binary linear codes]{On the automorphism groups of binary linear codes}

\author{Martino Borello}
\address{Dipartimento di Matematica e Applicazioni, Universit\`{a} degli Studi di Milano - Bi\-coc\-ca, via Cozzi 55, 20125 Milano}
\email{m.borello1@campus.unimib.it}
\thanks{Member INdAM-GNSAGA (Italy), IEEE}

\subjclass[2010]{Primary 94B05, 20B25}

\date{}

\begin{abstract}
Let $\CC$ be a binary linear code and suppose that its automorphism
group contains a non trivial subgroup $G$. What can we say about
$\CC$ knowing $G$? In this paper we collect some answers to this
question in the cases $G\cong C_p$, $G\cong C_{2p}$ and $G\cong
D_{2p}$ ($p$ an odd prime), with a particular regard to the case in
which $\CC$ is self-dual. Furthermore we generalize some methods
used in other papers on this subject. Finally we give a short survey
on the problem of determining the automorphism group of a putative
self-dual $[72,36,16]$ code, in order to show where these methods
can be applied.
\end{abstract}

\maketitle

This paper is a presentation of some of the main results about the
automorphism group of binary linear codes obtained by the author in
his Ph.D. thesis. Part of the results are proved in joint papers
with Wolfgang Willems, Francesca Dalla Volta and Gabriele Nebe.

The problem we want to investigate is the following: let $\CC$ be a
(self-dual) binary linear code and suppose that $\Aut(\CC)$ contains
a non trivial subgroup $G$. What can we say about $\CC$ knowing $G$?

To face this problem, usually we want to find out ``smaller pieces''
which are easier to determine and then look at the structure of the
whole code.

In \mbox{Section \ref{sectionprime}} we present a classical
decomposition of codes with automorphisms of odd prime order. In
\mbox{Section \ref{section2p}} we summarize the most significant
results of \cite{BW}, about codes with automorphisms of order $2p$,
where $p$ is an odd prime. \mbox{Section \ref{sectiondihedral}} is a
ge\-ne\-ra\-lization of methods used in \cite{FeulNe} and
\cite{BDN}, about codes whose automorphism groups contain particular
dihedral groups. Finally, in \mbox{Section \ref{sectioninter}} we
point out and generalize some theoretical tools used in
\cite{Baut6}, \cite{BDN} and \cite{BE8}.\\
Our methods can be applied
\begin{itemize}
  \item to study the possible automorphism groups of extremal self-dual binary linear codes;
  \item to construct self-orthogonal binary linear codes with large minimum distan\-ce and relatively large dimension;
  \item to classify self-dual binary linear codes with certain parameters.
\end{itemize}
Obviously the last one is the most ambitious.

In the last section, which is a short survey on the problem of
determining the automorphism group of a putative extremal self-dual
$[72,36,16]$ code, we underline where these methods can be applied,
showing their power.

\section{Background and notations}

We refer the reader to \cite{HP} for basic notions of Coding Theory
and to \cite{HB} for basic notions of Representation Theory. In this
section we just want to fix the notations we use.

Let $\CC$ be an $[n,k,d]$ code. Then we denote by $\G(\CC)$ a
\emph{generator matrix} of $\CC$, i.e. a matrix in
$\text{Mat}_{k,n}(\F_2)$ whose rows generate $\CC$.

\noindent Let $\sigma\in S_n$. Then we define
$\CC^\sigma:=\{c^\sigma \ | \ c\in\CC\}$. The \emph{automorphism
group} of $\CC$ is
$$\Aut(\CC):=\{\sigma\in S_n \ | \ \CC^\sigma=\CC\}\leq S_n.$$
The \emph{fixed code of} $\sigma$ is defined as
$$\CC(\sigma):=\{c\in\CC \ | \ c^\sigma=c\},$$
that is obviously a subcode of $\CC$.\\ If we call
$\Omega_1,\ldots,\Omega_{m_\sigma}$ the orbits of $\sigma$ on the
coordinates $\{1,\ldots,n\}$, we have trivially that
$c=(c_1,\ldots,c_n)\in\CC$ is in $\CC(\sigma)$ if and only if
$c_i=c_j$ for all $i,j\in\Omega_k$, for every
$k\in\{1,\ldots,m_\sigma\}$. In this case we say that $c$ is
\emph{constant on the orbits of} $\sigma$. Thus we can define a
\emph{natural projection associated to} $\sigma$
$$\pi_\sigma:\CC(\sigma)\rightarrow \F_2^{m_\sigma}$$
such that $(\pi_\sigma(c))_k:=c_h$ for any $h\in \Omega_k$, which is
clearly well-defined for $c\in\CC(\sigma)$. \\
If $\sigma$ is a permutation of order $p$ we say that $\sigma$ is of
\emph{type} $p$-$(c,f)$ if it has $c$ cycles of length $p$ and $f$
fixed points. \\
If $\sigma$ is a permutation of order $p\cdot q$ we say that
$\sigma$ is of \emph{type} $p\cdot q$-$(a,b,c;f)$ if it has $a$
cycles of length $p$, $b$ cycles of length $q$, $c$ cycles of length
$p\cdot q$ and $f$ fixed points.

Let $\mathcal{C},\mathcal{D}\leq\F_2^n$. We set
$\CC+\mathcal{D}:=\{c+d \ | \ c\in\CC, d\in\mathcal{D}\}$, sum of
$\CC$ and $\mathcal{D}$. If $\CC\cap\mathcal{D}=\{\textbf{0}\}$, we
say that the sum is \emph{direct} and we denote it by
$\CC\oplus\mathcal{D}$. This should not be confused with another
common concept of direct sum of codes, which we do not use in this
paper.

We use the following notations for groups:
\begin{itemize}
\item $C_n$ is the \emph{cyclic group} of order $n$;
\item $D_n$ is the \emph{dihedral group} of order $n$;
\item $S_n$ is the \emph{symmetric group} of degree $n$;
\item $A_n$ is the \emph{alternating group} of degree $n$.
\end{itemize}
Furthermore, for $H,G$ groups, $H\times G$ is the \emph{direct
product} of $H$ and $G$ while $H\rtimes G$ is a \emph{semidirect
product} of $H$ and $G$. If $H\leq G$, we denote the
\emph{centralizer} and the \emph{normalizer} of $H$ in $G$ by
$\textnormal{C}_G(H)$ and $\textnormal{N}_G(H)$ respectively.

We conclude giving the definition of a fundamental number: we denote
by $\sss(p)$ the \emph{multiplicative order of} $2$ in
$\F_p^\times$, i.e. the smallest $m\in\N$ such that $p \ | \ 2^m-1$.

\section{Cyclic group of order $p$ ($p$ an odd prime)}\label{sectionprime}

In this section we introduce a well-known classical decomposition of
codes with automorphisms of odd prime order. We want to present it
for completeness, although it is just a particular reformulation of
Maschke's Theorem, and to fix some notations useful in the
following.

Let $\VV:=\F_2^n$ and $\sigma\in S_n$ a permutation of odd prime
order $p$. Then, it is trivial to prove that
$$\VV=\VV(\sigma)\oplus \VV(\sigma)^\perp$$
where $\VV(\sigma)$ is the subspace fixed by $\sigma$ and
$\VV(\sigma)^\perp$ is the dual of $\VV(\sigma)$, that is clearly
the subspace of even-weight vectors on the orbits of $\sigma$. We
note that $\CC(\sigma)=\CC\cap\VV(\sigma)$ and we define
$\EE(\sigma):=\CC\cap \VV(\sigma)^\perp$. Then we have the
following.
\begin{theorem}\label{theoremdecompCE}
Let $\CC$ be a binary linear code and suppose $\sigma\in\Aut(\CC)$
of odd prime order $p$. Then $$\CC=\CC(\sigma)\oplus \EE(\sigma),$$
where $\CC(\sigma)$ is the fixed code of $\sigma$ and $\EE(\sigma)$
is the subcode of even-weight codewords on the orbits of $\sigma$.
\end{theorem}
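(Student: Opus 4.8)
The plan is to exploit the decomposition $\VV=\VV(\sigma)\oplus\VV(\sigma)^\perp$ that has already been established for $\sigma$ of odd prime order $p$, and to show that intersecting with $\CC$ respects this splitting. Concretely, I would first record the two obvious inclusions: $\CC(\sigma)=\CC\cap\VV(\sigma)$ and $\EE(\sigma)=\CC\cap\VV(\sigma)^\perp$ are both subcodes of $\CC$, and their intersection sits inside $\VV(\sigma)\cap\VV(\sigma)^\perp=\{\mathbf{0}\}$, so the sum $\CC(\sigma)+\EE(\sigma)$ is automatically direct. Hence the only thing requiring work is the inclusion $\CC\subseteq\CC(\sigma)\oplus\EE(\sigma)$; the reverse inclusion is immediate since both summands lie in $\CC$.

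For the nontrivial inclusion, take $c\in\CC$ and write $c=u+v$ with $u\in\VV(\sigma)$ and $v\in\VV(\sigma)^\perp$, using the ambient decomposition. The point is to give an explicit formula for $u$ (equivalently $v$) that makes it manifest that $u\in\CC$. Here the arithmetic of $\F_2$ and the fact that $p$ is odd enter: on each orbit $\Omega_k$ of $\sigma$ of length $p$, one can take $u$ to be the constant vector equal to the parity (sum over the orbit) of the entries of $c$ on $\Omega_k$, extended by $c$ itself on the fixed points; on a fixed point the "orbit" has length $1$ and $u$ agrees with $c$. One checks that this $u$ is constant on the orbits of $\sigma$, hence lies in $\VV(\sigma)$, and that $c-u=c+u$ has even weight on every orbit, hence lies in $\VV(\sigma)^\perp$; this pins down $u$ and $v$ as the components in the ambient splitting. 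It remains to see $u\in\CC$: because $p$ is odd, $u$ can be realized as a sum of $\F_2$-translates of $c$ under powers of $\sigma$, namely $u=\sum_{i=0}^{p-1}c^{\sigma^i}$ (the sum telescoping to the orbit-parity on length-$p$ orbits, and to $p\cdot c=c$ on fixed points since $p$ is odd). As $\sigma\in\Aut(\CC)$, each $c^{\sigma^i}\in\CC$, so $u\in\CC$, whence $u\in\CC(\sigma)$ and $v=c-u\in\EE(\sigma)$.

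Finally I would assemble these pieces: directness of the sum from the ambient direct sum, $\CC(\sigma)\oplus\EE(\sigma)\subseteq\CC$ from the definitions, and $\CC\subseteq\CC(\sigma)\oplus\EE(\sigma)$ from the explicit decomposition $c=u+v$ with $u=\sum_{i=0}^{p-1}c^{\sigma^i}$. The identification of $\EE(\sigma)$ as "the subcode of even-weight codewords on the orbits of $\sigma$" is then just the observation that $\VV(\sigma)^\perp$ is exactly the set of vectors with even weight on each orbit of $\sigma$, which was already noted in the excerpt.

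I do not expect any serious obstacle here; the statement is, as the authors say, essentially Maschke's theorem in disguise. The one place to be slightly careful is the claim that $\sum_{i=0}^{p-1}c^{\sigma^i}$ produces the correct fixed component in characteristic $2$: this is where oddness of $p$ is used (so that the contribution on fixed points is $c$, not $\mathbf{0}$), and one should verify that the resulting vector is genuinely the $\VV(\sigma)$-part of $c$ under the ambient splitting rather than merely some fixed vector. Everything else is bookkeeping with orbits and parities.
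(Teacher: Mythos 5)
Your proof is correct and is exactly the argument the paper has in mind: the paper states this theorem without proof, presenting it as a reformulation of Maschke's theorem on top of the ambient splitting $\VV=\VV(\sigma)\oplus\VV(\sigma)^\perp$, and your explicit use of the norm element $u=\sum_{i=0}^{p-1}c^{\sigma^i}$ (which lands in $\CC\cap\VV(\sigma)$ and leaves $c+u$ of even weight on each orbit precisely because $p$ is odd, i.e.\ coprime to the characteristic) is the standard way that Maschke-type splitting is made concrete here. Nothing is missing.
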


In order to get more information on the subcode $\EE(\sigma)$, with
a particular regard to the case in which $\CC$ is self-dual, we
investigate more closely the decomposition of $\VV$ . Firstly we
consider the case in which $n=p$ and then the general case.

\vspace{2mm} Let $n=p$, so that $\sigma$ is of type $p$-$(1,0)$.
Thus
$$\textnormal{G}(\VV(\sigma))=\left[
  \begin{smallmatrix}
    1 & 1 & 1 & 1 & \ldots & 1 & 1 \\
  \end{smallmatrix}
\right] \ \  \text{and} \ \ \textnormal{G}(\VV(\sigma)^\perp)=\left[
  \begin{smallmatrix}
        &   &   &   &        &   &   \\
      &   &   &   &        &   &   \\
    1 & 1 & 0 & 0 & \ldots & 0 & 0 \\
    0 & 1 & 1 & 0 & \ldots & 0 & 0 \\
    0 & 0 & 1 & 1 & \ldots & 0 & 0 \\
    \vdots & \vdots & \vdots & \vdots & \ddots & \vdots & \vdots \\
    0 & 0 & 0 & 0 & \ldots & 1 & 1 \\
          &   &   &   &        &   &   \\
  \end{smallmatrix}
\right]$$ in $\text{Mat}_{1,p}(\F_2)$ and $\text{Mat}_{p-1,p}(\F_2)$
respectively.\\
There is a natural isomorphism of vector spaces
\begin{equation}\label{eqphi}
\varphi:\F_2^p\rightarrow \F_2[x]/(x^p+1)=:Q
\end{equation}
which maps $(v_0,\ldots,v_{p-1})\mapsto
v_0+\ldots+v_{p-1}x^{p-1}$.\\
Notice that
$(x^p+1)=(x+1)(x^{p-1}+x^{p-2}+\ldots+x+1)$, with $(x+1)$ and
$(x^{p-1}+x^{p-2}+\ldots+x+1)$ coprime (since $p$ is odd). It is
well-known that the polynomial $(x^{p-1}+x^{p-2}+\ldots+x+1)$ is the
product of $t:=\frac{p-1}{\sss(p)}$ irreducible polynomials of
degree $\sss(p)$. So, let $(x^p+1)=q_0(x) q_1(x)\ldots q_t(x)$,
where $q_0(x):=(x+1)$ and the other terms are the $t$ irreducible
polynomials of degree $\sss(p)$. By the Chinese Remainder Theorem we
have
$$\begin{array}{ll} \F_2[x]/(x^p+1)=Q & \cong \F_2[x]/(q_0(x))\oplus \F_2[x]/(q_{1}(x))\oplus \ldots
                      \oplus \F_2[x]/(q_{t}(x)) \cong \\
                      & \cong \F_2 \oplus \F_{2^{s(p)}} \oplus \ldots \oplus \F_{2^{s(p)}} \end{array}$$
Furthermore, calling $Q_{j}:=\frac{x^p+1}{q_j(x)}$ we have
$\F_2[x]/(q_j(x)) \cong (Q_j)=:\mathcal{I}_j$ which is a principal
ideal of $\F_2[x]/(x^p+1)$ generated by $Q_j$. Notice that
$Q_j^2=Q_j$ and $Q_iQ_j=0$ if $i\neq j$ (the equalities are $\bmod \
x^p+1$). Then
$$\VV\cong \F_2[x]/(x^p+1)=\mathcal{I}_0\perp \mathcal{I}_1 \perp \ldots \perp \mathcal{I}_t$$
is an orthogonal sum of ideals (generated by orthogonal
idempotents), such that $\mathcal{I}_0\cong \F_2$ and
$\mathcal{I}_1\cong \ldots \cong \mathcal{I}_t\cong \F_{2^{s(p)}}$.
\vspace{2mm}

Let now $\sigma$ be of type $p$-$(c,f)$ and $n=pc+f$. Without lost
of generality we can relabel the coordinates to have
$$\sigma=(1,\ldots,p)(p+1,\ldots,2p)\ldots,((c-1)p+1,\ldots,pc).$$
As $\VV(\sigma)^\perp$ is the set of all even-weight vectors on the
orbits of $\sigma$, we have that
 $v_i=0$, for all $i\in\{pc+1,\ldots,n\}$, for
every $v \in \VV(\sigma)^\perp$. Let us call
$({\VV(\sigma)^\perp})^\ast\leq \F_2^{pc}$ the space obtained
puncturing $\VV(\sigma)^\perp$ on the last $f$ coordinates.\\ We
extend cycle-wise the map $\varphi$ defined in \eqref{eqphi} to a
map $\varphi_p$ as follows
$$\varphi_p:=\underbrace{\varphi\times\ldots\times\varphi}_{c \ \text{times}}:\F_2^{pc}\rightarrow Q^c,$$
via the natural identification $(\F_2^{p})^c=\F_2^{pc}$.\\ Let
$\varphi_p'$ the map $\varphi_p\times \text{id}_f$, where
$\id_f:=\F_2^f\rightarrow \F_2^f$ is the identity map, so that
$\varphi_p':\F_2^{n}\stackrel{\sim}{\rightarrow} Q^c\oplus \F_2^f$.
This map gives an isomorphism of vector spaces
$$\VV=\F_2^n\cong \F_2^{c+f} \oplus \F_{2^{s(p)}}^c\oplus \ldots \oplus \F_{2^{s(p)}}^c.$$
It is easy to observe that $\varphi_p'(\VV(\sigma))\cong \F_2^{c+f}$
and $\varphi_p(({\VV(\sigma)^\perp})^\ast)\cong
\F_{2^{s(p)}}^c\oplus \ldots \oplus \F_{2^{s(p)}}^c$. Furthermore
${\varphi_p'}_{|_{\VV(\sigma)}}=\pi_\sigma$.

\vspace{2mm}

Let us come back to the subcode $\EE(\sigma)$.\\ Clearly, if
$\sss(p)<p-1$, so that $t>1$, $\EE(\sigma)$ can be decomposed
further. A very
nice investigation of this case is contained in \cite{FeulNe}.\\
Here we consider only the fundamental case in which $\sss(p)=p-1$.
Then
$$\pi_\sigma(\CC(\sigma))\leq \F_2^{c+f} \qquad \text{and} \qquad \varphi_p(\EE(\sigma)^\ast)\leq \F_{2^{p-1}}^c,$$
where $\EE(\sigma)^\ast$ is the code obtained puncturing
$\EE(\sigma)$ on the last $f$ coordinates.\\
We conclude this section stating an important theorem, proved by
Vassil I. Yorgov.
\begin{theorem}[\cite{YorgovOddPrimes}]\label{theoremodd} Let $\CC$ be a binary code
with an automorphism $\sigma$ of odd prime order $p$, with
$\sss(p)=p-1$. Then the following are equivalent:
\begin{enumerate}
  \item[\rm a)] $\CC$ is self-dual.
  \item[\rm b)] $\pi_\sigma(\CC(\sigma))$ is self-dual and $\varphi_p(\EE(\sigma)^\ast)$
  is Hermitian self-dual.
\end{enumerate}
\end{theorem}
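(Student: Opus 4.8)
The plan is to exploit the orthogonal decomposition $\VV \cong \mathcal{I}_0 \perp \mathcal{I}_1 \perp \ldots \perp \mathcal{I}_t$ from the $n=p$ case, pushed through $\varphi_p'$ in the general case, and to track how the standard inner product on $\F_2^n$ restricts to each piece. Under $\varphi$, the all-ones vector (the generator of $\VV(\sigma)$) corresponds to $Q_0 = x^{p-1}+\ldots+x+1$, which spans $\mathcal{I}_0 \cong \F_2$; so $\varphi_p'(\VV(\sigma))$ is exactly the $\F_2^{c+f}$ summand, and on that summand $\varphi_p'$ agrees with $\pi_\sigma$. On the complementary summand, since $\sss(p)=p-1$ we have $t=1$, so $\varphi_p(({\VV(\sigma)^\perp})^\ast) \cong \F_{2^{p-1}}^c = \mathcal{I}_1^c$, a single $\F_{2^{p-1}}$-component per cycle. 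The first real step is to record the two compatibility facts: (i) the standard bilinear form on $\F_2^n$ restricted to $\VV(\sigma)$ transports, via $\pi_\sigma$, to the standard form on $\F_2^{c+f}$ up to the scalar $p \equiv 1 \pmod 2$ — so it is literally the standard form; and (ii) on the $\mathcal{I}_1^c$ side, the standard form on $\F_2^{pc}$ transports, via $\varphi_p$, to the Hermitian form $\langle u,v\rangle = \sum_j u_j \bar v_j$ on $\F_{2^{p-1}}^c$, where the bar is the Frobenius involution $a \mapsto a^{2^{(p-1)/2}}$ that generates $\mathrm{Gal}(\F_{2^{p-1}}/\F_{2^{(p-1)/2}})$. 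Fact (ii) is the crux and amounts to identifying the reciprocal-polynomial (conjugation) action on $Q$ with Frobenius on the field $\mathcal{I}_1 \cong \F_{2^{p-1}}$; this is classical but is the step I would write out carefully.

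With these two facts in hand, the theorem becomes bookkeeping. By Theorem \ref{theoremdecompCE}, $\CC = \CC(\sigma) \oplus \EE(\sigma)$ is an \emph{orthogonal} direct sum: $\CC(\sigma) \leq \VV(\sigma)$ and $\EE(\sigma) \leq \VV(\sigma)^\perp$, and these two ambient spaces are mutually orthogonal, so $\CC^\perp = \CC(\sigma)^{\perp_{\VV(\sigma)}} \oplus \EE(\sigma)^{\perp_{\VV(\sigma)^\perp}}$ where each perp is taken inside the respective ambient space. Hence $\CC$ is self-dual inside $\F_2^n$ if and only if $\CC(\sigma)$ is self-dual inside $\VV(\sigma)$ \emph{and} $\EE(\sigma)$ is self-dual inside $\VV(\sigma)^\perp$. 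Now apply the isomorphisms: $\pi_\sigma$ carries $\CC(\sigma)$ isometrically (by fact (i)) onto $\pi_\sigma(\CC(\sigma)) \leq \F_2^{c+f}$, so the first condition says precisely that $\pi_\sigma(\CC(\sigma))$ is a self-dual binary code. And $\varphi_p$ carries $\EE(\sigma)$ (after puncturing the $f$ zero coordinates, which changes nothing since all vectors of $\VV(\sigma)^\perp$ vanish there) onto $\varphi_p(\EE(\sigma)^\ast) \leq \F_{2^{p-1}}^c$; by fact (ii) the Euclidean-orthogonality of $\EE(\sigma)^\ast$ inside $({\VV(\sigma)^\perp})^\ast$ translates exactly into Hermitian self-duality of $\varphi_p(\EE(\sigma)^\ast)$.

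One must also check the dimension count is consistent — that ``self-orthogonal of the right size'' on each side forces genuine self-duality and vice versa — but this is automatic from $\dim_{\F_2}\VV(\sigma) = c+f$ and $\dim_{\F_2}\VV(\sigma)^\perp = c(p-1)$, together with the fact that an $\F_{2^{p-1}}$-subspace of $\F_{2^{p-1}}^c$ is Hermitian self-dual iff it has $\F_{2^{p-1}}$-dimension $c/2$ iff its $\F_2$-image has $\F_2$-dimension $c(p-1)/2$. So the equivalence a) $\Leftrightarrow$ b) follows. The only subtle point, and the part I would spend the most care on, is fact (ii): verifying that the transpose/reciprocal involution $v(x) \mapsto x^{\deg} v(1/x)$ on $Q$, which is what the standard inner product induces on $\F_2[x]/(x^p+1)$, restricts on the field $\mathcal{I}_1 \cong \F_{2^{p-1}}$ to the unique order-$2$ field automorphism, so that ``perpendicular'' really means ``Hermitian-perpendicular'' and not merely ``Euclidean-perpendicular over $\F_{2^{p-1}}$''. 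Everything else is Maschke's theorem plus linear algebra.
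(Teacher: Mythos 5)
The paper does not prove this theorem: it is stated as a cited result of Yorgov \cite{YorgovOddPrimes}, so there is no in-paper proof to compare against. Your argument is the standard one and is essentially correct: the orthogonal splitting $\VV=\VV(\sigma)\perp\VV(\sigma)^\perp$ reduces self-duality of $\CC=\CC(\sigma)\oplus\EE(\sigma)$ to self-duality of each summand in its own ambient space, fact (i) is immediate since each $p$-cycle contributes the odd scalar $p$, and your fact (ii) is exactly the mechanism the paper itself uses later (Section \ref{sectiondihedral}), where $[2^{\frac{p-1}{2}}]_p=[-1]_p$ identifies the reciprocal involution $x\mapsto x^{-1}$ with $\Phi_{2^{\frac{p-1}{2}}}$ on $\mathcal{I}_1\cong\F_{2^{p-1}}$.

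Two small points you should make explicit when you write out fact (ii). First, the Euclidean form on a cycle transports to the coefficient of $x^0$ in $u(x)v(x^{-1})$, which on $\mathcal{I}_1$ is a nonzero $\F_2$-linear functional of $u\bar v$, i.e.\ (a twist of) the \emph{trace} of the Hermitian form, not the Hermitian form itself; so what you get directly is that Euclidean duality corresponds to \emph{trace}-Hermitian duality. Second, to upgrade trace-Hermitian self-duality to Hermitian self-duality you need $\varphi_p(\EE(\sigma)^\ast)$ to be $\F_{2^{p-1}}$-linear, which holds because $\EE(\sigma)$ is $\sigma$-invariant, hence its image is a module over $\F_2[x]/(x^p+1)$ and thus over the field $\mathcal{I}_1$; nondegeneracy of the trace then shows the two duals coincide for linear codes. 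With those two sentences added, your proof is complete.
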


\begin{remark}
``$\pi_\sigma(\CC(\sigma))$ is self-dual if $\CC$ is self-dual''
holds for \mbox{every odd prime $p$} (see for example
\cite{Conway}). Does it hold also for $p=2$? In general the answer
is ne\-ga\-tive. For example, there are automorphisms of order $2$
of the extended Hamming Code of length $8$ for which it holds true
and others for which it is false.
\end{remark}

\section{Cyclic group of order $2p$ ($p$ an odd prime)}\label{section2p}

Throughout this section we consider $\CC$, a self-dual code of even
length $n$, and $\sigma_{2p}\in\Aut(\CC)$ of order $2p$, where $p$
is an odd prime. We show some module theoretical properties of such
a code, assuming that the involution $\sigma_2:=\sigma_{2p}^p$
acts fixed point freely on the $n$ coordinates. \\
Without loss of generality, we may assume that
$$\sigma_2=\sigma_{2p}^p=(1,2)(3,4)\ldots(n-1,n).$$
We consider the natural projection
$\pi_{\sigma_2}:\CC(\sigma_2)\rightarrow \F_2^{\frac n 2}$ and the
map $$\phi:\CC\rightarrow \F_2^{\frac n 2},$$ with
$(c_1,c_2,\ldots,c_{n-1},c_n)\overset{\phi}{\mapsto}(c_1+c_2,\ldots,c_{n-1}+c_n)$.\\
Stefka Bouyuklieva proved \cite{Baut2} that
$$\phi(\CC)\leq \pi_{\sigma_2}(\CC(\sigma_2))=\phi(\CC)^\perp.$$
In particular,
$$\phi(\CC) = \pi_{\sigma_2}(\CC(\sigma_2)) = \phi(\CC)^\perp \Leftrightarrow \dim \, \pi_{\sigma_2}(\CC(\sigma_2)) = \dim \, \CC(\sigma_2) = \frac{n}{4}.$$
Starting from this easy observation, we proved the following result,
that is the crucial theorem of our joint work with W. Willems.

\begin{theorem}[\cite{BW}]\label{proj}
The code $\CC$ is a projective $\F_2 \langle \sigma_{2p}
\rangle$-module if and only if $\pi_{\sigma_2}(\CC(\sigma_2))$ is a
self-dual code.
\end{theorem}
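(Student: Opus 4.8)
The plan is to translate the module-theoretic statement into the language of the previous two sections and reduce it to a rank computation. Since $p$ is odd and $\F_2\langle\sigma_{2p}\rangle = \F_2\langle\sigma_p\rangle\otimes\F_2\langle\sigma_2\rangle$ with $\sigma_p:=\sigma_{2p}^2$ of order $p$ and $\sigma_2:=\sigma_{2p}^p$ of order $2$, Maschke's theorem applies to the $\F_2\langle\sigma_p\rangle$-action: the obstruction to projectivity lives entirely on the $2$-part, i.e. $\CC$ is a projective $\F_2\langle\sigma_{2p}\rangle$-module if and only if it is a projective $\F_2\langle\sigma_2\rangle$-module. First I would make this reduction precise, using Theorem \ref{theoremdecompCE} applied to $\sigma_p$ to split $\CC=\CC(\sigma_p)\oplus\EE(\sigma_p)$ as $\F_2\langle\sigma_2\rangle$-modules (the two summands are $\sigma_{2p}$-invariant because $\langle\sigma_p\rangle$ is normal in $\langle\sigma_{2p}\rangle$), and recalling that over $\F_2 C_2$ the only indecomposables are the trivial module $\F_2$ and the free module $\F_2 C_2$; the projective ones are exactly the free summands.

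Next I would identify the non-projective part combinatorially. Since $\sigma_2$ acts without fixed points, $\CC$ is $\F_2 C_2$-projective if and only if it is free, and the dimension of the largest free summand is governed by $\CC(\sigma_2)$: writing $\CC=(\F_2 C_2)^{r}\oplus(\F_2)^{s}$ one has $\dim\CC(\sigma_2)=r+s$ and $\dim\CC = 2r+s = n/2$, so freeness is equivalent to $s=0$, i.e. to $\dim\CC(\sigma_2)=n/4$. By the displayed equivalence of Bouyuklieva quoted just before the theorem, $\dim\CC(\sigma_2)=n/4$ holds precisely when $\pi_{\sigma_2}(\CC(\sigma_2))=\phi(\CC)^\perp$, i.e. when $\pi_{\sigma_2}(\CC(\sigma_2))$ is self-dual (as it always equals its own dual's superspace containing $\phi(\CC)$, equality forces self-duality). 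So the statement is already available for $\F_2 C_2$; the content of the theorem is that nothing new is lost when one passes from $C_2$ to $C_{2p}$.

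Therefore the heart of the argument is the claim that $\CC$ is $\F_2\langle\sigma_{2p}\rangle$-projective $\Leftrightarrow$ $\CC$ is $\F_2\langle\sigma_2\rangle$-projective. One direction is immediate: restriction of a projective module along a subgroup is projective. For the converse I would argue that $\F_2\langle\sigma_{2p}\rangle \cong \F_2\langle\sigma_2\rangle \oplus (\text{stuff coming from the nontrivial idempotents }Q_j)$, and that the latter summands are semisimple because $p$ is odd; concretely, using the idempotent decomposition of $\F_2[x]/(x^p+1)$ from Section \ref{sectionprime}, $\CC$ decomposes $\sigma_{2p}$-equivariantly into the part living over $q_0$ (where $\sigma_p$ acts trivially, so the module structure is just the $\sigma_2$-structure on $\CC(\sigma_p)$) and the parts living over the $q_j$, $j\geq 1$ (where the action factors through a semisimple algebra, hence is automatically projective). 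Matching this with the $C_2$-restriction, projectivity of the whole is equivalent to projectivity of the $q_0$-component, which is $\CC(\sigma_p)$ with its $\sigma_2$-action, and this in turn is controlled by $\dim\CC(\sigma_2)$. The main obstacle I anticipate is bookkeeping: making the equivariant identifications clean enough that ``the $q_0$-part of $\CC$ as an $\F_2\langle\sigma_{2p}\rangle$-module'' is unambiguously the same object as ``$\CC(\sigma_p)$ as an $\F_2\langle\sigma_2\rangle$-module,'' and verifying that $\sigma_2$ still acts fixed-point-freely on whatever punctured/projected code one ends up working with, so that Bouyuklieva's dimension criterion can be invoked verbatim.
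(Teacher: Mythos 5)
Your overall strategy is the right one and is essentially the standard route to this theorem: reduce projectivity over $\F_2\langle\sigma_{2p}\rangle$ to projectivity over the Sylow $2$-subgroup $\langle\sigma_2\rangle$, note that over $\F_2\langle\sigma_2\rangle$ ``projective'' means ``free'', write $\CC\cong(\F_2\langle\sigma_2\rangle)^r\oplus\F_2^s$ with $2r+s=\frac n2$ and $\dim\CC(\sigma_2)=r+s$ so that freeness is equivalent to $\dim\CC(\sigma_2)=\frac n4$, and close the loop with Bouyuklieva's chain $\phi(\CC)\leq\pi_{\sigma_2}(\CC(\sigma_2))=\phi(\CC)^\perp$, which makes $\pi_{\sigma_2}(\CC(\sigma_2))$ dual-containing and hence self-dual exactly when that dimension is $\frac n4$. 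All of that is correct (the injectivity of $\pi_{\sigma_2}$ on $\CC(\sigma_2)$, needed to equate the two dimensions, comes from $\sigma_2$ being fixed point free).

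The gap is in the step you yourself single out as the heart. The implication ``projective over $\langle\sigma_2\rangle$ $\Rightarrow$ projective over $\langle\sigma_{2p}\rangle$'' is true, but your justification fails: the idempotent attached to $q_j(x)$, $j\geq 1$, cuts out the algebra $e_j\F_2\langle\sigma_{2p}\rangle\cong\F_{2^{s(p)}}\otimes_{\F_2}\F_2\langle\sigma_2\rangle\cong\F_{2^{s(p)}}[u]/(u^2)$, which is local but \emph{not} semisimple, and its simple module $V_j\cong\F_{2^{s(p)}}$ is not projective. So the components of $\CC$ over the $q_j$ with $j\geq1$ are not automatically projective; Proposition \ref{decomp} explicitly allows unpaired summands $V_i$ (the multiplicities $z_i$ for $i\geq1$), and Corollary \ref{coro2} is built precisely on such non-projective contributions. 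Followed literally, your plan would replace the theorem's criterion by ``the $q_0$-component $\CC(\sigma_p)$ is $\F_2\langle\sigma_2\rangle$-free'', i.e. $\dim\bigl(\CC(\sigma_p)\cap\CC(\sigma_2)\bigr)=\frac12\dim\CC(\sigma_p)$, a condition on $\CC(\sigma_{2p})$ rather than on $\CC(\sigma_2)$, which is not equivalent to the stated one. The correct replacement is standard relative projectivity: since $[\langle\sigma_{2p}\rangle:\langle\sigma_2\rangle]=p$ is odd, hence invertible in $\F_2$, every $\F_2\langle\sigma_{2p}\rangle$-module is a direct summand of the module induced from its restriction to $\langle\sigma_2\rangle$, so projectivity of $\CC|_{\langle\sigma_2\rangle}$ already forces projectivity of $\CC$. (Alternatively, argue component by component that a module over $\F_{2^{s(p)}}[u]/(u^2)$ is free if and only if its restriction to $\F_2[u]/(u^2)$ is free.) With either fix, the rest of your argument goes through.
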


One of the reasons which makes interesting to determine if the code
is projective is explained in the following remark.

\begin{remark}\label{decompositionxx}
Let $G$ be a finite group and $\mathcal{M}$ a projective
$KG$-submodule. Then for \emph{every} decomposition
$$\soc(\mathcal{M})=V_1\oplus\ldots\oplus V_m$$
of the socle in irreducible $KG$-submodules, we have
$$\mathcal{M}= \PP(V_1)\oplus\ldots\oplus \PP(V_m),$$
where $\PP(V_i)$ is the projective cover of $V_i$ in $\mathcal{M}$,
for all $i\in\{1,\ldots,m\}$.
\end{remark}

So, whenever we have a projective module, there are several
restrictions on its structure and, in particular, the knowledge of
its socle gives us a lot of information about the whole module.

\subsection{Consequences on the structure of $\CC$}

We deduce  some properties of $\CC$ related to the action of the
automorphism $\sigma_{2p}$.

Since $\sigma_2$ acts fixed point freely, $\sigma_{2p}$ is of type
$2p$-$(w,0,x;0)$ for certain $x,w\in\N$ such that $n=2px+2w$. Thus
we have the following decomposition of the $\F_2\langle \sigma_{2p}
\rangle$-module $\F_2^n$:
$$\F_2^n\cong\underbrace{\F_2 \langle
\sigma_{2p} \rangle\oplus \ldots\oplus \F_2 \langle \sigma_{2p}
\rangle}_{x \ \text{times}}\oplus \underbrace{\F_2 \langle \sigma_2
\rangle \oplus \ldots \oplus \F_2 \langle \sigma_2 \rangle}_{w \
\text{times}}.$$ By Section \ref{sectionprime}, recalling that
$\F_2\langle\sigma_{2p}\rangle\cong\F_2\langle\sigma_{2p}^2\rangle\otimes\F_2\langle\sigma_{2p}^p\rangle$
we get
$$\F_2^n\cong
\underbrace{\begin{array}{c} V_0 \\ V_0 \end{array}\oplus \ldots
\oplus \begin{array}{c} V_0 \\ V_0 \end{array}}_{x + w \
\text{times}} \oplus \ldots \oplus \underbrace{\begin{array}{c}
V_{t}
\\ V_{t} \end{array}\oplus \ldots \oplus \begin{array}{c} V_{t}
\\ V_{t} \end{array}}_{x \ \text{times}},
$$
where $t:=\frac{p-1}{\sss(p)}$, $V_0 \cong \F_2$, $V_i$ is an
irreducible module of dimension $\sss(p)$ for every
$i\in\{1,\ldots,t\}$ and $\begin{array}{c} V_j \\ V_j \end{array}$
is a non-split extension of $V_j$ by $V_j$ for every
$j\in\{0,\ldots,t\}$.

Then we get the following result for self-dual codes.

\begin{prop}[\cite{BW}]\label{decomp}
Let $\CC$ be a self-dual binary linear code of length $n$ and
suppose $\sigma_{2p}\in\Aut(\CC)$ of type $2p$-$(w,0,x;0)$. Then the
code $\CC$ has the following structure as an $\F_2\langle
\sigma_{2p} \rangle$-module:
$$\CC=\underbrace{\begin{array}{c} V_0 \\ V_0 \end{array}\oplus \ldots \oplus \begin{array}{c} V_0 \\ V_0 \end{array}}_{y_0 \ \text{times}}\oplus\underbrace{\begin{array}{c}          V_0 \end{array}\oplus \ldots \oplus \begin{array}{c}         V_0 \end{array}}_{z_0 \ \text{times}}\oplus \ldots $$ $$ \mbox{} \qquad \ldots \oplus\underbrace{\begin{array}{c} V_{t} \\ V_{t} \end{array}\oplus \ldots \oplus \begin{array}{c} V_{t} \\ V_{t} \end{array}}_{y_{t} \ \text{times}}
 \oplus \underbrace{\begin{array}{c}     V_{t} \end{array}\oplus \ldots \oplus \begin{array}{c}   V_{t} \end{array}}_{z_{t}},
$$
\noindent where
\begin{itemize}
  \item[\rm a)] $2y_0+z_0=x +w$,
  \item[\rm b$_1$)] $2y_i+z_i=x$ for all $i\in\{1,\ldots,t \}$, if $\sss(p)$ is even,
  \item[\rm b$_2$)] $z_i=z_{2i}$ and $y_{i}+y_{2i} + z_i=x$ for all $i\in\{1,\ldots,t\}$, if $s(p)$ is
  odd.
\end{itemize}
In particular $x\equiv z_1 \equiv \ldots \equiv z_t \bmod 2$, if
$\sss(p)$ is even.
\end{prop}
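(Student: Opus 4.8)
The plan is to read the overall shape of the decomposition off the ring structure of $\F_2\langle\sigma_{2p}\rangle$, and then to pin down the multiplicities $y_j,z_j$ one block at a time, using self-duality of $\CC$ together with the decomposition of Section~\ref{sectionprime} for the permutation $\sigma_p:=\sigma_{2p}^2$. Since $p$ is odd, $\F_2\langle\sigma_{2p}\rangle\cong\F_2\langle\sigma_2\rangle\otimes_{\F_2}\F_2\langle\sigma_p\rangle$, and by the Chinese Remainder decomposition recalled in Section~\ref{sectionprime} one has $\F_2\langle\sigma_p\rangle\cong\F_2\times\F_{2^{\sss(p)}}\times\cdots\times\F_{2^{\sss(p)}}$ ($t$ factors of the second kind); hence $\F_2\langle\sigma_{2p}\rangle$ is a product of $t+1$ local rings $B_0,\dots,B_t$ with $B_0\cong\F_2[y]/(y^2)$ and $B_j\cong\F_{2^{\sss(p)}}[y]/(y^2)$ for $j\ge1$, and over each $B_j$ the only indecomposable modules are the simple $V_j$ and its projective cover $\PP(V_j)$ (the non-split self-extension of $V_j$ by itself). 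By Krull--Schmidt this already gives the stated shape of $\CC$; writing $e_0,\dots,e_t$ for the block idempotents we have $\CC=\bigoplus_j e_j\CC$ with $e_j\CC\cong\PP(V_j)^{\oplus y_j}\oplus V_j^{\oplus z_j}$, and it remains to determine the $y_j,z_j$. For a), apply Theorem~\ref{theoremdecompCE} to $\sigma_p$: block $0$ is the trivial-$\langle\sigma_p\rangle$-isotypic component of $\F_2^n$, so $\CC(\sigma_p)=e_0\CC$; as $\sigma_{2p}$ is of type $2p$-$(w,0,x;0)$, squaring turns each $2p$-cycle into two $p$-cycles and each transposition into two fixed points, so $\sigma_p$ is of type $p$-$(2x,2w)$, $\pi_{\sigma_p}$ is injective on $\CC(\sigma_p)$, and by the remark after Theorem~\ref{theoremodd} the image $\pi_{\sigma_p}(\CC(\sigma_p))\le\F_2^{2x+2w}$ is self-dual; hence $\dim e_0\CC=\dim\CC(\sigma_p)=x+w$, which is $2y_0+z_0=x+w$.

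The heart of the matter is b$_1$) and b$_2$), for which I would use the duality theory of cyclic codes applied cycle-wise to $\sigma_p$. The standard inner product makes $\F_2^n$ an orthogonal sum of $e_0\F_2^n=\VV(\sigma_p)$ with the isotypic components $e_i\F_2^n$ ($i\ge1$), and $e_i\F_2^n\perp e_j\F_2^n$ unless $V_j\cong V_i^{*}$. The arithmetic point is that $V_i\cong V_i^{*}$ for every $i\ge1$ exactly when $\sss(p)$ is even: this is equivalent to $-1\in\langle2\rangle\le\F_p^{\times}$ (i.e.\ the irreducible factors $q_i$ being self-reciprocal), and since $\F_p^{\times}$ is cyclic with unique involution $-1$, its subgroup $\langle2\rangle$ of order $\sss(p)$ contains $-1$ precisely when $\sss(p)$ is even. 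Thus when $\sss(p)$ is even the form is non-degenerate on each $e_i\F_2^n$, while when $\sss(p)$ is odd the nontrivial blocks pair into couples $\{i,2i\}$ with $V_{2i}\cong V_i^{*}$ and the form is non-degenerate on each $e_i\F_2^n\oplus e_{2i}\F_2^n$, both summands being totally isotropic. In either case, since $\CC$ splits along blocks and the form is block-diagonal in this coarser sense, a short computation (using that $\CC$ is self-orthogonal and that $\F_2^n$ is the orthogonal sum of these summands) shows $\CC\cap W=(\CC\cap W)^{\perp_W}$ for each such orthogonal summand $W$, so $\dim(\CC\cap W)=\tfrac12\dim W$. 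Recalling from the decomposition of $\F_2^n$ displayed before the statement that $e_i\F_2^n\cong\PP(V_i)^{\oplus x}$ for $i\ge1$: when $\sss(p)$ is even this gives $(2y_i+z_i)\sss(p)=\dim e_i\CC=x\,\sss(p)$, i.e.\ $2y_i+z_i=x$, which is b$_1$), and $x\equiv z_1\equiv\cdots\equiv z_t\pmod2$ follows at once.

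When $\sss(p)$ is odd, fix a couple $\{i,2i\}$. Since $e_i\CC$ and $e_{2i}\CC$ lie in the two totally isotropic summands and are mutually orthogonal (because $\CC$ is self-orthogonal), the fact that $\CC\cap(e_i\F_2^n\oplus e_{2i}\F_2^n)=e_i\CC\oplus e_{2i}\CC$ is a Lagrangian forces $e_{2i}\CC$ to be exactly the annihilator of $e_i\CC$ under the $\langle\sigma_{2p}\rangle$-invariant perfect pairing between $e_i\F_2^n$ and $e_{2i}\F_2^n$; turning this pairing into the module isomorphism $e_{2i}\F_2^n\cong(e_i\F_2^n)^{*}$, it says $e_{2i}\CC\cong(e_i\F_2^n/e_i\CC)^{*}$. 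Now $e_i\F_2^n=\PP(V_i)^{\oplus x}$ is projective of Loewy length $2$, with head $V_i^{\oplus x}$; the image of $e_i\CC=\PP(V_i)^{\oplus y_i}\oplus V_i^{\oplus z_i}$ in this head is $V_i^{\oplus y_i}$ (the socle $V_i^{\oplus(y_i+z_i)}$ of $e_i\CC$ being killed), so the head of $e_i\F_2^n/e_i\CC$ is $V_i^{\oplus(x-y_i)}$; on the other hand, since the head of a dual is the dual of the socle and $V_{2i}^{*}\cong V_i$, the head of $(e_{2i}\CC)^{*}$ is $V_i^{\oplus(y_{2i}+z_{2i})}$. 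Comparing gives $y_i+y_{2i}+z_{2i}=x$; running the same computation with the roles of $i$ and $2i$ exchanged gives $y_i+y_{2i}+z_i=x$, and subtracting the two yields $z_i=z_{2i}$. Together these are b$_2$).

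The step I expect to be the main obstacle is the middle one: correctly describing the orthogonality pattern among the block components $e_i\F_2^n$ and matching the self-dual/dual-couple dichotomy with the parity of $\sss(p)$. Once this is set up, a)--b$_2$) amount to bookkeeping over the local rings $B_j$; the one genuinely delicate point there is that in the $\sss(p)$-odd case one must compare \emph{heads} (equivalently socles), not merely composition lengths, since $y_i$ and $y_{2i}$ are not separately determined by $\CC$.
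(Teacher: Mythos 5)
Your argument is correct and complete: the block decomposition over the local summands of $\F_2\langle\sigma_{2p}\rangle$, the identification $e_0\CC=\CC(\sigma_p)$ for part a), and the use of the duality pairing between blocks (self-dual blocks when $\sss(p)$ is even, dual couples $\{i,2i\}$ when $\sss(p)$ is odd, with the socle/head comparison via $e_{2i}\CC\cong(e_i\F_2^n/e_i\CC)^*$) all check out. Note that this survey does not actually prove the proposition --- it is quoted from the joint paper with Willems \cite{BW} --- but your route is the natural module-theoretic one and is consistent with the setup the paper does provide (the decomposition of $\F_2^n$ into projective indecomposables displayed just before the statement); the argument in \cite{BW} is organized around the isomorphism $\CC=\CC^\perp\cong(\F_2^n/\CC)^*$ and the explicit structure of quotients of $\PP(V_i)^{\oplus x}$, which is the same computation you perform by comparing heads and socles.
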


\noindent This quite technical proposition has a strong consequence
in a particular case.

\begin{coro}[\cite{BW}]\label{coro2}
Let $\CC$ be a self-dual binary linear code of length \mbox{$n\equiv
0 \bmod 4$}. Suppose $\sigma_{2p}\in\Aut(\CC)$ of type
$2p$-$(w,0,x;0)$ with $\sss(p)$ even. If $w$ is odd, then
$$\dim \CC(\sigma_2) = \dim \pi_{\sigma_2}(\CC(\sigma_2)) \geq \frac{n}{4} + \frac{\sss(p)
t }{2}=\frac{n}{4} + \frac{p-1}{2},$$ where
$\sigma_2=\sigma_{2p}^p$.\\
In particular $\pi_{\sigma_2}(\CC(\sigma_2))$ is not self-dual so
that $\CC$ is not a projective $\F_2\langle
\sigma_{2p}\rangle$-module.
\end{coro}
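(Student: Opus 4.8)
The plan is to read off the lower bound directly from the module decomposition of $\CC$ given in Proposition~\ref{decomp}, using the parity information in item~(b$_1$). First I would compute $\dim\CC(\sigma_2)$ in terms of the multiplicities $y_i,z_i$. Since $\sigma_2=\sigma_{2p}^p$ acts as the top/bottom identification on each block, a copy of $V_j$ contributes its full dimension to $\CC(\sigma_2)$ while a copy of $\begin{array}{c}V_j\\V_j\end{array}$ contributes only $\dim V_j$ (the fixed points of the involution on a non-split self-extension form exactly the bottom copy). Hence $\dim\CC(\sigma_2)=\sum_{j=0}^{t}(y_j+z_j)\dim V_j = (y_0+z_0)+\sss(p)\sum_{i=1}^{t}(y_i+z_i)$. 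Next I would bring in the equality $\dim\CC(\sigma_2)=\dim\pi_{\sigma_2}(\CC(\sigma_2))$ and the inequality $\phi(\CC)\leq\pi_{\sigma_2}(\CC(\sigma_2))=\phi(\CC)^\perp$ from Section~\ref{section2p}: the latter forces $\dim\pi_{\sigma_2}(\CC(\sigma_2))\geq n/4$, with equality exactly when it is self-dual.

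The crux is a parity argument showing that equality cannot occur here, and more precisely that the dimension exceeds $n/4$ by at least $\sss(p)t/2$. Using (a) we have $y_0+z_0 = (x+w) - y_0$, and using (b$_1$), $y_i+z_i = x - y_i$ for $i\geq 1$. The key observation is the final clause of Proposition~\ref{decomp}: when $\sss(p)$ is even, $x\equiv z_1\equiv\cdots\equiv z_t \pmod 2$. Now $\phi(\CC)\subseteq\pi_{\sigma_2}(\CC(\sigma_2))$ with $\dim\phi(\CC)=\dim\CC-\dim\CC(\sigma_2)=n/2-\dim\CC(\sigma_2)$, so $\dim\pi_{\sigma_2}(\CC(\sigma_2))=\dim\CC(\sigma_2)$ gives the self-orthogonality $\phi(\CC)\subseteq\phi(\CC)^\perp$ automatically; the question is whether the containment $\phi(\CC)\subseteq\pi_{\sigma_2}(\CC(\sigma_2))$ is proper. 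I would instead argue directly: $\dim\pi_{\sigma_2}(\CC(\sigma_2))-n/4 = \dim\CC(\sigma_2)-n/4$, and plugging in the block count, reduce this quantity modulo the constraints to a nonnegative multiple of $\sss(p)$ whose coefficient has the parity of $w$ (via the relation $2y_0+z_0=x+w$, so $z_0\equiv x+w\pmod 2$, combined with $z_i\equiv x\pmod 2$). When $w$ is odd this coefficient is odd, hence at least $t$ after accounting for the fact that the relevant difference is naturally even unless forced odd; multiplying by $\sss(p)$ and halving gives exactly $\sss(p)t/2=(p-1)/2$.

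The main obstacle I anticipate is bookkeeping: one must pin down precisely how $n\equiv 0\pmod 4$ interacts with $n=2px+2w$ (so $px+w\equiv 0\pmod 2$, i.e. $x\equiv w\pmod 2$ since $p$ is odd) and thread this through the congruences in (a) and (b$_1$) to certify that the surplus $\dim\CC(\sigma_2)-n/4$ is a \emph{half}-integer multiple of $\sss(p)t$ that is nonzero exactly when $w$ is odd. Once the counting is set up correctly, the conclusion that $\pi_{\sigma_2}(\CC(\sigma_2))$ is not self-dual is immediate, since self-duality would require its dimension to equal $n/4$; and then Theorem~\ref{proj} (in contrapositive form) yields that $\CC$ is not a projective $\F_2\langle\sigma_{2p}\rangle$-module, completing the proof.
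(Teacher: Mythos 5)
Your route is the intended one: the corollary is meant to be read off from Proposition \ref{decomp}, and you have correctly assembled all the ingredients --- the count $\dim\CC(\sigma_2)=\sum_j(y_j+z_j)\dim V_j$ (a free $\F_2\langle\sigma_2\rangle$-summand contributes half its dimension, a trivial summand all of it), the injectivity of $\pi_{\sigma_2}$ on the fixed code, the relations (a) and (b$_1$), the congruence $z_i\equiv x\pmod 2$, and the observation that $n\equiv 0\pmod 4$ with $n=2px+2w$ forces $x\equiv w\pmod 2$. The bookkeeping you defer does close, but the way you describe it in the middle paragraph is misstated and should be fixed before writing it out. Substituting $y_0=\tfrac{x+w-z_0}{2}$ and $y_i=\tfrac{x-z_i}{2}$ into your formula gives
$$\dim\CC(\sigma_2)=\frac{x+w+z_0}{2}+\sss(p)\sum_{i=1}^{t}\frac{x+z_i}{2}=\frac{n}{4}+\frac{z_0+\sss(p)\sum_{i=1}^{t}z_i}{2},$$
so the surplus over $n/4$ is $\tfrac12\bigl(z_0+\sss(p)\sum_i z_i\bigr)$. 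Contrary to what you write, this is not a half-integer multiple of $\sss(p)t$, and it is not ``nonzero exactly when $w$ is odd'' (it can well be positive for $w$ even). More importantly, the reason $\sum_i z_i\geq t$ is not that ``the coefficient is odd'' --- an odd number need only be $\geq 1$, which would yield the weaker bound $n/4+\sss(p)/2$ --- but that it is a sum of $t$ terms, \emph{each} of which satisfies $z_i\equiv x\equiv w\equiv 1\pmod 2$ and hence $z_i\geq 1$; together with $z_0\geq 0$ this gives exactly the claimed $\geq \tfrac n4+\tfrac{\sss(p)t}{2}=\tfrac n4+\tfrac{p-1}{2}$. Your concluding steps (a self-dual code in $\F_2^{n/2}$ would have dimension exactly $n/4$, and Theorem \ref{proj} in contrapositive then rules out projectivity) are correct.
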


\noindent Other consequences of Proposition \ref{decomp} can be
found in \cite{BW}.

\section{Dihedral group of order $2p$ ($p$ an odd prime)}\label{sectiondihedral}

In this section we consider the structure of a self-dual binary
linear code $\CC$ with a dihedral group as subgroup of $\Aut(\CC)$.
We try to generalize here the main idea used  in \cite{FeulNe} by G.
Nebe and Thomas Feulner to approach the case $D_{10}$ for the
extremal self-dual binary linear code of length $72$. The
assumptions we make are somehow too strong, but they make the
notations simpler and they are sufficient for our purposes.

\noindent Let us now suppose that
\begin{itemize}
  \item $p$ is an odd prime with $\sss(p)=p-1$;
  \item $\CC$ is a self-dual binary linear code of length $n$ ($n$ divisible by $2p$);
  \item $\sigma_p\in \Aut(\CC)$ of
  order $p$ is fixed point free (so that the number of cycles is $c=\frac n p$);
  \item $\sigma_2\in \Aut(\CC)$ of order $2$ is fixed point free;
  \item $\langle\sigma_p\rangle \rtimes \langle \sigma_2\rangle\cong
  D_{2p}$ is a dihedral group
 of order $2p$.
\end{itemize}

\noindent As we have seen in Section \ref{sectionprime},
$\CC=\CC(\sigma_p)\oplus \EE(\sigma_p)$.  The action of the
involution $\sigma_2$ and the results of Theorem \ref{theoremodd}
give strong restrictions on the structure to $\CC$, as we will
prove.

\noindent Without lost of generality we can set
$$\sigma_p:=(1,\ldots,p)(p+1,\ldots,2p)\ldots(n-p+1,\ldots,n)$$
and
$$\sigma_2:=(1,p+1)(2,2p)\ldots(p,p+2)\ldots (n-p,n-p+2).$$

\subsection{Preliminaries}

We need to understand better the structure of the field
$\F_{2^{p-1}}$ in its realization as an ideal $\mathcal{I}$ of
$\F_2[x]/(x^p+1)$, presented in Section \ref{sectionprime}.

\begin{remark} In the following we indicate with $a \bmod b$ the remainder of the division of $a$ by $b$. \\ Furthermore, we
indentify the cosets of $\F_2[x]/(x^p+1)$ with their
representatives.
\end{remark}

Remember that the ideal $\mathcal{I}$ is generated by $(1+x)$. It is
straightforward to observe that
$(x+x^2+\ldots+x^{p-1})\in\mathcal{I}$ is the identity of the
field.\\ Since $\sss(p)=p-1$ we have that
$$(1+x),(1+x)^2,(1+x)^4,\ldots,(1+x)^{2^{p-2}}$$
is an $\F_2$-basis of $\F_{2^{p-1}}$. Furthermore
$$a_0(1+x)+a_1(1+x)^2+\ldots+a_{p-2}(1+x)^{2^{p-2}}=$$$$=(a_0+\ldots+a_{p-2})+a_0 x+ a_1 x^2 +\ldots+a_{p-2} x^{2^{p-2}}.$$
Let $\psi:i\mapsto i+\frac{p-1}{2} \bmod p-1$ and
$\Phi_{2^{\frac{p-1}{2}}}$ the Frobenius automorphism of
$\F_{2^{p-1}}$.
$$\Phi_{2^{\frac{p-1}{2}}}((a_0+\ldots+a_{p-1})+a_0 x+ a_1 x^2 +\ldots+ a_{p-2} x^{2^{p-2}})=$$
$$=(a_0+\ldots+a_{p-1})+a_{\psi^{-1}(0)} x+ a_{\psi^{-1}(1)} x^2 +\ldots+ a_{\psi^{-1}(p-2)} x^{2^{p-2}}.$$
If we identify every polynomial with the ordered vector of $\F_2^p$
of its coefficients, the Frobenius automorphism corresponds to a
permutation of $S_p$. \\Since $[2^{\frac{p-1}{2}}]_p=[-1]_p$, the
permutation
$$\prod_{i=1}^{\frac{p-1}{2}}(2^i \bmod p , 2^{\psi(i)} \bmod
p)$$ is equal to $$(1,p-1)(2,p-2)(3,p-3)\ldots
\left(\frac{p-1}{2},\frac{p+1}{2}\right)$$ so that the Frobenius
automorphism corresponds to the following permutation on the
coefficients of polynomials
$$(2,p)(3,p-1)(4,p-2)\ldots
\left(\frac{p+1}{2},\frac{p+3}{2}\right)$$ that inverts the order of
the last $p-1$ coordinates of the cycle of length $p$.

Let us consider now the direct product of two copies of
$\F_{2^{p-1}}$, so that the coefficients live in $\F_2^{2p}$. The
permutation
$$(1,p+1)(2,2p)(3,2p-1)(4,2p-2)\ldots(p,p+2)\in S_{2p}$$
corresponds to $(\alpha,\beta)\mapsto
\left(\Phi_{2^{\frac{p-1}{2}}}(\beta),\Phi_{2^{\frac{p-1}{2}}}(\alpha)\right)$
over $\F_{2^{p-1}}^2$.

\noindent Let us set
$\overline{\alpha}:=\Phi_{2^{\frac{p-1}{2}}}(\alpha)=\alpha^{2^{\frac{p-1}{2}}}$.

It follows easily that the permutation
$$\sigma_2=(1,p+1)(2,2p)\ldots(p,p+2)\ldots(n-p,n-p+2)$$
acts as follows
$$(\alpha_1,\alpha_2,\ldots,\alpha_{c-1},\alpha_c)\mapsto(\overline{\alpha_2},\overline{\alpha_1},\ldots,\overline{\alpha_{c}},\overline{\alpha_{c-1}})$$
on $\F_{2^{p-1}}^c$ ($c$ even).

\subsection{Main theorem}

We can now state the main result. The notations are those fixed in
the introduction of this section.

\begin{theorem}\label{dihedral}
Let $\CC$ be a self-dual code of length $n$ such that $\langle
\sigma_p\rangle \rtimes \langle \sigma_2\rangle$ is a subgroup of
$\Aut(\CC)$. If $\pi_{\sigma_2}(\CC(\sigma_2))$ is self-dual, then
there exist
\begin{itemize}
  \item $\mathcal{A}\leq \F_2^{\frac n 2}$, which is a self-dual binary linear code,
  \item $\mathcal{B}\subseteq \F_{2^{p-1}}^{\frac c 2}$, which is a
$\F_{2^{\frac{p-1}{2}}}$-linear trace-Hermitian self-dual code,
\end{itemize}
such that
$$\CC=\pi_{\sigma_p}^{-1}(\mathcal{A})\oplus \varphi_p^{-1}\left(\langle \pi^{-1} (\mathcal{B}) \rangle_{\F_{2^{p-1}}}\right)$$
where $\pi_{\sigma_p}$ is the natural projection associated to
$\sigma_p$, $\varphi_p$ is the map defined in \mbox{Section
\ref{sectionprime}} and
$$\pi:=\F_{2^{p-1}}^{c} \rightarrow \F_{2^{p-1}}^{\frac c 2}$$
maps
$(\varepsilon_1,\varepsilon_2,\ldots,\varepsilon_{c-1},\varepsilon_{c})\mapsto(\varepsilon_1,\ldots,\varepsilon_{c-1})$.
\end{theorem}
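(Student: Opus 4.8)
The plan is to combine the prime-order decomposition $\CC=\CC(\sigma_p)\oplus\EE(\sigma_p)$ from Theorem \ref{theoremdecompCE} with the action of the involution $\sigma_2$, using the preliminary computations of this section to identify how $\sigma_2$ acts on each summand. First I would observe that since $\sigma_2$ normalizes $\langle\sigma_p\rangle$, it fixes (setwise) both $\VV(\sigma_p)$ and $\VV(\sigma_p)^\perp$, hence it acts on $\CC(\sigma_p)$ and on $\EE(\sigma_p)$ separately; this is why the direct sum in the statement respects the $\sigma_p$-isotypic decomposition. On the fixed part, $\pi_{\sigma_p}$ carries $\CC(\sigma_p)$ isomorphically onto a code $\mathcal{A}\leq\F_2^{n/p+\,0}=\F_2^{c}$, but because $n$ is divisible by $2p$ and $\sigma_2$ has no fixed points, $\sigma_2$ induces a fixed-point-free involution on the $c$ coordinates (pairing the block $\{1,\dots,p\}$ with $\{p+1,\dots,2p\}$, etc.), so $\pi_{\sigma_p}(\CC(\sigma_p))$ lands in $\F_2^{c}$ with that involution — and one must still explain why the relevant code is self-dual of length $n/2$ rather than length $c$; the point is that $\pi_{\sigma_2}(\CC(\sigma_2))$ is the code in $\F_2^{n/2}$ that, under the hypothesis, is self-dual, and $\pi_{\sigma_p}(\CC(\sigma_p))$ is obtained from it, so one sets $\mathcal{A}=\pi_{\sigma_2}(\CC(\sigma_2))$ and checks $\pi_{\sigma_p}^{-1}(\mathcal{A})=\CC(\sigma_p)$ as a consequence of the self-duality (via the dimension count recalled before Theorem \ref{proj}).

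Next I would treat $\EE(\sigma_p)$. Under $\varphi_p$ (puncturing the empty fixed part away, since $f=0$ here) the code $\EE(\sigma_p)^\ast$ becomes a subset of $\F_{2^{p-1}}^{c}$, and by Theorem \ref{theoremodd} applied to the self-dual $\CC$ with automorphism $\sigma_p$ of order $p$ and $\sss(p)=p-1$, the image $\varphi_p(\EE(\sigma_p)^\ast)$ is Hermitian self-dual over $\F_{2^{p-1}}$. Now I bring in the explicit description of $\sigma_2$ on $\F_{2^{p-1}}^{c}$ established in the Preliminaries: it sends $(\alpha_1,\alpha_2,\dots,\alpha_{c-1},\alpha_c)$ to $(\overline{\alpha_2},\overline{\alpha_1},\dots,\overline{\alpha_c},\overline{\alpha_{c-1}})$, i.e. it swaps the two coordinates within each consecutive pair and applies the half-Frobenius $\overline{\ \cdot\ }$. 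The $\sigma_2$-invariance of $\varphi_p(\EE(\sigma_p))$ forces: given $\varepsilon=(\varepsilon_1,\dots,\varepsilon_c)$ in the code, the vector with $\varepsilon_{2i}$ and $\varepsilon_{2i-1}$ interchanged and barred is again in the code. This means the second coordinate of each pair is determined by the first via the ambient code relation, so $\pi$ (forgetting every even-indexed coordinate) is injective on $\varphi_p(\EE(\sigma_p))$; set $\mathcal{B}:=\pi(\varphi_p(\EE(\sigma_p)))\subseteq\F_{2^{p-1}}^{c/2}$. Then $\varphi_p(\EE(\sigma_p))=\langle\pi^{-1}(\mathcal{B})\rangle$ as an $\F_{2^{p-1}}$-module (the angle brackets reconstruct the missing coordinates), which gives the formula $\EE(\sigma_p)=\varphi_p^{-1}(\langle\pi^{-1}(\mathcal{B})\rangle_{\F_{2^{p-1}}})$.

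It remains to identify the self-duality type of $\mathcal{B}$. The Hermitian self-duality of $\varphi_p(\EE(\sigma_p))$ over $\F_{2^{p-1}}$ restricts, via the $\sigma_2$-action and the ``fold'' $\pi$, to a form on $\F_{2^{p-1}}^{c/2}$ that is $\F_{2^{(p-1)/2}}$-bilinear — because the barring operator $\alpha\mapsto\alpha^{2^{(p-1)/2}}$ generates exactly $\mathrm{Gal}(\F_{2^{p-1}}/\F_{2^{(p-1)/2}})$ — and one computes that the induced form is the trace-Hermitian form, so $\mathcal{B}$ is $\F_{2^{(p-1)/2}}$-linear and trace-Hermitian self-dual in $\F_{2^{p-1}}^{c/2}$. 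This last identification of the form is the main obstacle: one has to write the Hermitian pairing $\sum_i \varepsilon_i\overline{\delta_i}$ on $\F_{2^{p-1}}^c$, split the sum over pairs, use $\varepsilon_{2i}=\overline{\varepsilon_{2i-1}}$-type relations coming from $\sigma_2$-invariance, and recognize the collapsed expression $\sum_i\left(\varepsilon_{2i-1}\overline{\delta_{2i-1}}+\overline{\varepsilon_{2i-1}}\delta_{2i-1}\right)=\sum_i\mathrm{Tr}_{\F_{2^{p-1}}/\F_{2^{(p-1)/2}}}(\varepsilon_{2i-1}\overline{\delta_{2i-1}})$ as the trace-Hermitian form, together with the matching dimension count $\dim_{\F_{2^{(p-1)/2}}}\mathcal{B}=c/2$ to confirm self-duality. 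The other summand's self-duality, and the fact that the two pieces meet only in $\mathbf{0}$ and span $\CC$, follow from the orthogonality $\VV=\VV(\sigma_p)\perp\VV(\sigma_p)^\perp$ and Theorem \ref{theoremdecompCE}.
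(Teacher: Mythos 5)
Your skeleton (decompose $\CC=\CC(\sigma_p)\oplus\EE(\sigma_p)$, apply Theorem \ref{theoremodd}, use the explicit action of $\sigma_2$ on $\F_{2^{p-1}}^c$, fold by $\pi$, recognize the trace-Hermitian form) is the paper's, but two of your key identifications are wrong. First, $\mathcal{A}$: the paper takes $\mathcal{A}:=\pi_{\sigma_p}(\CC(\sigma_p))$, a code of length $c+f=n/p$, which is self-dual \emph{unconditionally} by Theorem \ref{theoremodd} --- the hypothesis on $\pi_{\sigma_2}(\CC(\sigma_2))$ plays no role here. Your substitute $\mathcal{A}=\pi_{\sigma_2}(\CC(\sigma_2))$ is not even type-correct: that code sits in $\F_2^{n/2}$, which is not the codomain of $\pi_{\sigma_p}$, so $\pi_{\sigma_p}^{-1}(\mathcal{A})$ is meaningless and the equality $\pi_{\sigma_p}^{-1}(\mathcal{A})=\CC(\sigma_p)$ cannot be ``checked''. (The ``$\F_2^{n/2}$'' in the statement should be read against the proof, where $\mathcal{A}\leq\F_2^{c+f}$.)

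Second, and more seriously, $\mathcal{B}$. You claim that $\sigma_2$-invariance of $\varphi_p(\EE(\sigma_p))$ forces each even-indexed coordinate of every codeword to be determined by the preceding odd-indexed one. That is false: invariance of a code under a permutation says the image of each codeword is again \emph{some} codeword, not that each codeword is fixed. Consequently your $\mathcal{B}:=\pi(\varphi_p(\EE(\sigma_p)))$ is the wrong object --- even if $\pi$ happened to be injective there, the image would have $\F_{2^{(p-1)/2}}$-dimension $c$, i.e.\ it would be all of $\F_{2^{p-1}}^{c/2}$, whereas a trace-Hermitian self-dual code must have dimension $c/2$. The paper instead sets $\mathcal{B}:=\pi\bigl(\varphi_p(\EE(\sigma_p))(\sigma_2)\bigr)$, the projection of the $\sigma_2$-\emph{fixed} subcode, whose elements genuinely have the shape $(\varepsilon_1,\overline{\varepsilon_1},\ldots,\varepsilon_{c/2},\overline{\varepsilon_{c/2}})$; only there does your pairing computation $\sum_i(\varepsilon_i\overline{\gamma_i}+\overline{\varepsilon_i}\gamma_i)=0$ legitimately apply, giving trace-Hermitian self-orthogonality. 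To upgrade this to self-duality, and to recover $\varphi_p(\EE(\sigma_p))$ as $\langle\pi^{-1}(\mathcal{B})\rangle_{\F_{2^{p-1}}}$, one needs $\dim_{\F_2}\varphi_p(\EE(\sigma_p))(\sigma_2)=\tfrac12\dim_{\F_2}\varphi_p(\EE(\sigma_p))$, and this is precisely where the hypothesis that $\pi_{\sigma_2}(\CC(\sigma_2))$ is self-dual enters: it makes $\varphi_p(\EE(\sigma_p))$ a projective $\F_2\langle\sigma_2\rangle$-module (cf.\ Theorem \ref{proj}). Your proposal never derives this dimension count --- the ``matching dimension count'' at the end is asserted, not proved, and is false for the $\mathcal{B}$ you defined.
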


\begin{proof}
As we have proved in Section \ref{sectionprime},
$$\CC=\CC(\sigma_p)\oplus \EE(\sigma_p).$$
Put $\mathcal{A}:=\pi_{\sigma_p}(\CC(\sigma_p))\leq \F_2^{c+f}$.
This is self-dual by Theorem \ref{theoremodd}.

Let us consider $\varphi_p(\EE(\sigma_p))\leq\F_{2^{p-1}}^c$. This
is an Hermitian self-dual code, again by Theorem \ref{theoremodd}.
As we have just shown the action of $\sigma_2$ on
$\varphi_p(\EE(\sigma_p))$ is the following
$$(\varepsilon_1,\varepsilon_2,\ldots,\varepsilon_{c-1},\varepsilon_{c})^{\sigma_2}=(\overline{\varepsilon_2},\overline{\varepsilon_1},\ldots,\overline{\varepsilon_{c}},\overline{\varepsilon_{c-1}})$$
Note that this action is only $\F_{2^{\frac{p-1}{2}}}$-linear.
Furthermore, the fixed code of $\sigma_2$ is
$$\varphi_p(\EE(\sigma_p))(\sigma_2):=\{(\varepsilon_1,\overline{\varepsilon_1},\ldots,\varepsilon_{\frac{c}{2}},\overline{\varepsilon_{\frac{c}{2}}})\in \varphi_p(\EE(\sigma_p))\}.$$
Put $\mathcal{B}:=\pi(\varphi_p(\EE(\sigma_p))(\sigma_2))$. \\
For
$\gamma , \epsilon \in \mathcal{B} $  the Hermitian inner product of
their preimages in $\varphi_p(\EE(\sigma_p))(\sigma_2)$ is
$$ \sum_{i=1}^{\frac c 2}
(\epsilon_i\overline{\gamma_i}+\overline{\epsilon_i}\gamma_i)$$
which is $0$ since $\varphi_p(\EE(\sigma_p))$ is Hermitian
self-dual. Therefore $\mathcal{B}$ is trace-Hermitian
self-orthogonal. We have
$$\dim_{\F_2}(\mathcal{B}) = \dim _{\F_2} (\varphi_p(\EE(\sigma_p))(\sigma_2)) = \frac{1}{2} \dim _{\F_2}
(\varphi_p(\EE(\sigma_p))) $$ since $\varphi_p(\EE(\sigma_p))$ is a
projective $\F_2\langle\sigma_2 \rangle$-module (since
$\pi_{\sigma_2}(\CC(\sigma_2))$ is self-dual), and so $\mathcal{B}$
is self-dual. \\
Since $\dim_{\F_2}(\mathcal{B})=\dim_{\F_{2^{p-1}}}
(\varphi_p(\EE(\sigma_p)))$, the $\F_{2^{p-1}}$-linear code
$\varphi_p(\EE(\sigma_p))\leq \F_{2^{p-1}}^{c}$ is obtained from
$\mathcal{B}$ as stated.
\end{proof}

\section{Interaction between fixed codes}\label{sectioninter}

In this section we investigate the interaction between fixed codes
of different automorphisms. In particular, we want to give an idea
of what can be said in the case that the automorphism group of a
binary linear code (not necessarily self-dual) contains a subgroup
$H$ that is a semidirect product (abelian or not) of two subgroups,
say $H=A\rtimes B$.

\subsection{Non-abelian semidirect products of two subgroups}

Let us start from the non-abelian case.

Actually, in this case we have an action of $H$ on the normal
subgroup $A$ and in particular on the fixed codes of the
automorphisms belonging to $A$. We restrict our attention to a
particular case. However, this case gives some flavor of what can be
done in general.

{\sc Notation}. For $\tau,\sigma\in S_n$ we denote by $\tau^\sigma$
the conjugate of $\tau$ by $\sigma$.

\noindent Let us start with a basic and trivial lemma.

\begin{lemma}\label{conjfixed}
Let $\CC$ be a linear code of length $n$ and take
$\tau\in\Aut(\CC)$. If $\sigma$ is a permutation of $S_n$ then
$$\tau^\sigma\in\Aut(\CC^\sigma)$$
and
$$\CC(\tau)^\sigma=\CC(\tau^\sigma).$$
\end{lemma}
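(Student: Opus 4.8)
The plan is to unwind the definitions of conjugation, the action of $S_n$ on $\F_2^n$, and the fixed code, and then verify the two claimed equalities by a direct symbol-chase. The only facts needed are that $\sigma \mapsto (\,\cdot\,)^\sigma$ is a (right) action of $S_n$ on $\F_2^n$ — so that $(c^\sigma)^\tau = c^{\sigma\tau}$ — together with the observation that this action is linear, hence $\CC^\sigma$ is again a linear code of length $n$ whenever $\CC$ is. No results beyond the background section are required; this really is a bookkeeping lemma.

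First I would prove $\tau^\sigma \in \Aut(\CC^\sigma)$. Write $\tau^\sigma = \sigma^{-1}\tau\sigma$ (or $\sigma\tau\sigma^{-1}$, depending on the paper's convention for $\tau^\sigma$ and for the right action; I would pin down the convention and keep it fixed throughout). Then for any $c \in \CC$ one computes $(c^\sigma)^{\tau^\sigma} = c^{\sigma\sigma^{-1}\tau\sigma} = (c^\tau)^\sigma$, which lies in $\CC^\sigma$ since $c^\tau \in \CC$. As $c$ ranges over $\CC$, $c^\sigma$ ranges over all of $\CC^\sigma$, so $(\CC^\sigma)^{\tau^\sigma} \subseteq \CC^\sigma$; since $\tau^\sigma$ is a permutation and $\CC^\sigma$ is finite-dimensional (or just finite), equality follows, i.e. $\tau^\sigma \in \Aut(\CC^\sigma)$.

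Next I would prove $\CC(\tau)^\sigma = \CC(\tau^\sigma)$ by showing each side is contained in the other, using the same identity $(c^\sigma)^{\tau^\sigma} = (c^\tau)^\sigma$. For ``$\subseteq$'': if $v \in \CC(\tau)^\sigma$, write $v = c^\sigma$ with $c \in \CC$ and $c^\tau = c$; then $v \in \CC^\sigma$ (which equals a code containing $v$) and $v^{\tau^\sigma} = (c^\sigma)^{\tau^\sigma} = (c^\tau)^\sigma = c^\sigma = v$, so $v \in \CC(\tau^\sigma)$. For ``$\supseteq$'': if $w \in \CC(\tau^\sigma)$, then $w \in \CC^\sigma$, so $w = c^\sigma$ for some $c \in \CC$, and $w = w^{\tau^\sigma} = (c^\tau)^\sigma$; applying $\sigma^{-1}$ (i.e. using injectivity of the action of $\sigma$) gives $c = c^\tau$, so $c \in \CC(\tau)$ and $w = c^\sigma \in \CC(\tau)^\sigma$. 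This closes the argument.

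Honestly, there is no real obstacle here — the lemma is labelled ``basic and trivial'' for good reason. The only thing to be careful about is consistency of conventions: whether $v^\sigma$ permutes coordinates by $\sigma$ or by $\sigma^{-1}$, and whether $\tau^\sigma$ means $\sigma^{-1}\tau\sigma$ or $\sigma\tau\sigma^{-1}$. These two choices must be compatible so that the key identity $(v^\sigma)^{\tau^\sigma} = (v^\tau)^\sigma$ holds; once the conventions are fixed at the start of Section~\ref{sectioninter}, the verification is immediate. I would present the proof in three short lines: record $(v^\sigma)^{\tau^\sigma}=(v^\tau)^\sigma$, then derive the two statements from it.
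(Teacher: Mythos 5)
Your proof is correct and follows the same route as the paper: both arguments reduce to the identity $(c^\sigma)^{\tau^\sigma}=(c^\tau)^\sigma$ (equivalently, the paper's chain $c\in\CC(\tau)^\sigma\Leftrightarrow c^{\sigma^{-1}\tau}=c^{\sigma^{-1}}\Leftrightarrow c^{\tau^\sigma}=c$), with the paper simply declaring the first assertion ``clear'' where you spell it out. Your attention to the compatibility of the conjugation and action conventions is exactly the right point of care, and nothing is missing.
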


\begin{proof}
The first assertion is clear. Then, for $c\in \CC$ we have
$$c\in \CC(\tau)^\sigma\Leftrightarrow
c^{\sigma^{-1}}\in\CC(\tau)\Leftrightarrow
c^{\sigma^{-1}\tau}=c^{\sigma^{-1}}\Leftrightarrow
c^{\tau^\sigma}=c\Leftrightarrow c\in \CC(\tau^\sigma),$$ which
proves the second assertion.
\end{proof}

This easy observation suggests a construction for codes with
semidirect automorphism subgroups.

\begin{theorem}\label{semid}
Let $\CC$ be a binary linear code. Suppose that $G=E_m\rtimes H$ is
a subgroup of $\Aut(\CC)$, where $E_m$ is an elementary abelian
$p$-group and $H$ acts transitively on $E_m^\times$. Then
$$\sum_{\varepsilon\in E_m^\times}\CC(\varepsilon)=\sum_{\kappa\in H}\CC(\varepsilon_0)^\kappa$$
for any $\varepsilon_0\in E_m^\times$.
\end{theorem}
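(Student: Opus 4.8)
The statement asserts that the sum of fixed codes over all nontrivial elements of $E_m$ equals the sum of the $H$-conjugates of a single fixed code $\CC(\varepsilon_0)$. The plan is to prove the two inclusions separately, with Lemma \ref{conjfixed} doing the heavy lifting on the containment $\sum_{\kappa\in H}\CC(\varepsilon_0)^\kappa \subseteq \sum_{\varepsilon\in E_m^\times}\CC(\varepsilon)$, and the transitivity hypothesis giving the reverse.

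First I would treat $\supseteq$ from right to left in the displayed equation: fix $\kappa\in H$. Since $\kappa\in\Aut(\CC)$ we have $\CC^\kappa=\CC$, so Lemma \ref{conjfixed} gives $\CC(\varepsilon_0)^\kappa=\CC(\varepsilon_0^\kappa)$. Now $\varepsilon_0^\kappa\in E_m$ because $E_m\trianglelefteq G$, and $\varepsilon_0^\kappa\neq\id$ because conjugation is a bijection; hence $\varepsilon_0^\kappa\in E_m^\times$ and $\CC(\varepsilon_0)^\kappa=\CC(\varepsilon_0^\kappa)\subseteq\sum_{\varepsilon\in E_m^\times}\CC(\varepsilon)$. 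Summing over $\kappa\in H$ yields $\sum_{\kappa\in H}\CC(\varepsilon_0)^\kappa\subseteq\sum_{\varepsilon\in E_m^\times}\CC(\varepsilon)$.

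For the reverse inclusion, take any $\varepsilon\in E_m^\times$. By transitivity of the $H$-action on $E_m^\times$ there is some $\kappa\in H$ with $\varepsilon_0^\kappa=\varepsilon$, and again by Lemma \ref{conjfixed} (using $\CC^\kappa=\CC$) we get $\CC(\varepsilon)=\CC(\varepsilon_0^\kappa)=\CC(\varepsilon_0)^\kappa\subseteq\sum_{\kappa\in H}\CC(\varepsilon_0)^\kappa$. Since $\varepsilon\in E_m^\times$ was arbitrary, summing gives $\sum_{\varepsilon\in E_m^\times}\CC(\varepsilon)\subseteq\sum_{\kappa\in H}\CC(\varepsilon_0)^\kappa$, and the two inclusions together finish the proof.

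There is really no hard part here: the whole argument is a bookkeeping exercise built on Lemma \ref{conjfixed} plus the two hypotheses (normality of $E_m$, which ensures the conjugates stay inside $E_m^\times$, and transitivity of $H$ on $E_m^\times$, which ensures every element of $E_m^\times$ is reached). The only point that deserves a word of care is that $\kappa\in H\subseteq G\le\Aut(\CC)$ really does fix $\CC$ setwise, so that the hypothesis ``$\CC^\sigma=\CC$'' needed to specialize Lemma \ref{conjfixed} is automatic; after that observation the result is immediate. Note that the elementary-abelian and $p$-group structure of $E_m$ plays no role in this particular statement — only that $E_m$ is a normal subgroup on whose nonidentity elements $H$ acts transitively — though of course that structure is what makes the hypothesis plausible in applications.
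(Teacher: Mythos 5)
Your proof is correct and follows exactly the route the paper intends: the paper's own proof is the one-line remark that the theorem ``follows directly from Lemma \ref{conjfixed},'' and your two-inclusion argument (normality of $E_m$ for one direction, transitivity of $H$ on $E_m^\times$ for the other) is just the careful write-up of that. Nothing to add.
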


\begin{proof}
It follows directly from Lemma \ref{conjfixed}.
\end{proof}

Then we have the following.

\begin{coro}\label{corosemid}
Let $p$ be a Mersenne prime, that is $p=2^r-1$ for a certain
$r\in\N$. Let $E_{2^r}$ be an elementary abelian group of order
$2^r$ and let $G=E_{2^r}\rtimes \langle \sigma_p\rangle$, where
$\sigma_p$ is an
automorphism of order $p$ {\rm(}$G$ non abelian{\rm)}.\\
Suppose that $\CC$ is a binary linear code such that $G$ is a
subgroup of $\Aut(\CC)$. Then for any involution $\varepsilon_0 \in
E_{2^r}$ it holds that
$$\sum_{\varepsilon\in E_{2^r}^\times} \CC(\varepsilon)=\sum_{i=0}^{p-1} \CC(\varepsilon_0)^{\sigma_p^i}.$$
\end{coro}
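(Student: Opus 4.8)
The plan is to derive Corollary~\ref{corosemid} as a direct specialization of Theorem~\ref{semid}. First I would observe that the hypotheses match: take $E_m = E_{2^r}$, which is an elementary abelian $2$-group (so the prime in Theorem~\ref{semid} is $p=2$, not to be confused with the $p$ in the corollary), and take $H = \langle \sigma_p \rangle \cong C_p$ acting on $E_{2^r}$. The nontrivial elements $E_{2^r}^\times$ number $2^r - 1 = p$, and since $G$ is assumed non-abelian, the action of $\langle \sigma_p \rangle$ on $E_{2^r}$ is nontrivial; as $|\langle\sigma_p\rangle| = p$ is prime, the action has no nontrivial fixed points on $E_{2^r}$, so every orbit of $\langle\sigma_p\rangle$ on $E_{2^r}^\times$ has size $p$. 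Since $|E_{2^r}^\times| = p$ as well, there is exactly one orbit, i.e. $\langle \sigma_p \rangle$ acts transitively on $E_{2^r}^\times$. This is precisely the transitivity hypothesis of Theorem~\ref{semid}.

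Having verified the hypotheses, I would apply Theorem~\ref{semid} with $\varepsilon_0$ any involution of $E_{2^r}$ (every nontrivial element of $E_{2^r}$ is an involution), obtaining
$$\sum_{\varepsilon\in E_{2^r}^\times}\CC(\varepsilon)=\sum_{\kappa\in \langle\sigma_p\rangle}\CC(\varepsilon_0)^\kappa.$$
Then I would simply rewrite the right-hand side by listing the elements of $\langle \sigma_p \rangle$ as $\sigma_p^0, \sigma_p^1, \ldots, \sigma_p^{p-1}$, which gives
$$\sum_{\kappa\in \langle\sigma_p\rangle}\CC(\varepsilon_0)^\kappa = \sum_{i=0}^{p-1}\CC(\varepsilon_0)^{\sigma_p^i},$$
yielding the stated identity.

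Honestly, there is no real obstacle here — the corollary is essentially a dictionary translation of the theorem into the Mersenne-prime setting. The only point that deserves a sentence of care is the transitivity verification: one must note that non-abelianness of $G$ forces the $C_p$-action on $E_{2^r}^\times$ to be fixed-point-free (because a fixed point would generate a $\langle\sigma_p\rangle$-invariant subgroup on which the action is trivial, and an orbit-counting/prime-order argument forces all orbits to have full size $p$), and then the numerical coincidence $|E_{2^r}^\times| = 2^r - 1 = p$ closes the argument. I would also remark parenthetically that the case $r=1$ (i.e. $p=1$) is vacuous, so implicitly $r \geq 2$; for $r \geq 2$ the group $E_{2^r}$ is genuinely of rank $\geq 2$ and a non-abelian $G$ exists.
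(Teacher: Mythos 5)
Your proof is correct and follows essentially the same route as the paper: verify that $\langle\sigma_p\rangle$ acts transitively on $E_{2^r}^\times$ (orbits have size $1$ or $p$, non-abelianness of $G$ rules out the trivial action, and $|E_{2^r}^\times|=2^r-1=p$ forces a single orbit), then apply Theorem~\ref{semid}. One small caution: your mid-proof assertion that primality of $p$ by itself gives fixed-point-freeness is not a valid inference in isolation (a nontrivial $C_p$-action on an elementary abelian $2$-group can have nontrivial fixed points in general); it is the numerical coincidence $|E_{2^r}^\times|=p$ that closes the argument, exactly as your final parenthetical --- and the paper --- acknowledges.
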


\begin{proof}
$|E_{2^r}^\times|=2^r-1$. The cyclic group $\langle \sigma_p\rangle$
acts on it. The orbits for this action have order $p$ or order $1$.
Since $p=|E_{2^r}^\times|$ there is only one orbit of order $p$:
supposing the contrary we have $G$ abelian, a contradiction. So the
action is transitive and the assertion follows from Theorem
\ref{semid}.
\end{proof}

Obviously, similar results can be deduced for other groups. Notice
that $A_4$ satisfies the hypothesis of Corollary \ref{corosemid}
with $p=3$.

Let us conclude this subsection, underlining a very useful tool to
investigate further a code with such an automorphism group.\\Let
$\mathcal{D}:=\sum_{\varepsilon\in E_m^\times} \CC(\varepsilon)$.
The group $G$ acts on $\mathcal{Q}:= \mathcal{D}^\perp/\mathcal{D}$
with kernel containing $E_m$. The space $\mathcal{Q}$ is hence a
$\F_2\langle \sigma_p\rangle$-module. On this space we still have a
decomposition in the part fixed by $\sigma_p$ and its complement and
we can repeat arguments totally analogous to the ones in Section
\ref{sectionprime}. This gives again a very restrictive structure.

\subsection{Direct products of cyclic groups}

Let us conclude with a few consi\-derations on the interaction
between fixed codes of different automorphisms in the abelian case.
The results of this subsection can be generalized to any abelian
finite group, but the notation would become too complex.

We consider in particular the group $C_p\times C_q$ with $p,q$ not
necessarily distinct primes. This case gives an idea of what can be
said in a general context.

Let us suppose that $\CC$ is a code (not necessarily self-dual) such
that $C_p\times C_q$ is a subgroup of $\Aut(\CC)$ with $C_p =
\langle\sigma_p\rangle$, $C_q = \langle\sigma_q\rangle$, cyclic
groups of prime (not necessarily distinct) order.\\
 Let $\sigma_p$
be of type $p$-$(c,f)$. Then
$$\pi_{\sigma_p}(\CC(\sigma_p))\leq \F_2^{c+f}.$$
Every element of $\textnormal{C}_{S_n}(\sigma_p)$ (the centralizer
of $\sigma_p$ in $S_n$) acts on the orbits of $\sigma_p$. So we can
define naturally a projection
$$\eta_{\sigma_p}:\textnormal{C}_{S_n}(\sigma_p) \rightarrow S_{c+f}$$
that maps $\tau \in \textnormal{C}_{S_n}(\sigma_p)$ on the
permutation corresponding to the action of $\tau$ on the orbits of
$\sigma_p$. \\
If $\sigma_q$ is of type $q$-$(c',f')$ we can define
in a completely analogous way
$$\eta_{\sigma_q}:\textnormal{C}_{S_n}(\sigma_q) \rightarrow S_{c'+f'}.$$
We collect in the following some observations.

\begin{remark}\label{remarkxxxx}
Let $\CC$ be a code such that $C_p\times C_q \leq \Aut(\CC)$ with
$C_p = \langle\sigma_p\rangle$, $C_q = \langle\sigma_q\rangle$,
cyclic groups of prime (not necessarily distinct) order. Then
\begin{itemize}
\item[\rm a)] $\eta_{\sigma_p}(\sigma_q) \in
\Aut(\pi_{\sigma_p}(\CC(\sigma_p)));$
\item[\rm b)] $\eta_{\sigma_q}(\sigma_p) \in
\Aut(\pi_{\sigma_q}(\CC(\sigma_q)));$
\item[\rm c)] $\eta_{\eta_{\sigma_p}(\sigma_q)}(\pi_{\sigma_p}(\CC(\sigma_p)) (\eta_{\sigma_p}(\sigma_q)))=\eta_{\eta_{\sigma_q}(\sigma_p)} (\pi_{\sigma_q}(\CC(\sigma_q)) (\eta_{\sigma_q}(\sigma_p)));$
\item[\rm d)] if $p,q$ are distinct and $\sigma_p\sigma_q$ is of type
$pq$-$(a,b,c;f)$ then $\eta_{\sigma_p}(\sigma_q)$ is of type
$q$-$(c+b,a+f)$ and $\eta_{\sigma_q}(\sigma_p)$ is of type
$p$-$(c+a,b+f)$.
\end{itemize}
\end{remark}

\noindent Notice that a) and b) are strong conditions on the fixed
codes.

\section{The automorphism group of an extremal self-dual code of
\mbox{length $72$}}

The existence of an extremal self-dual code of length $72$ is a
long-standing open problem of classical Coding Theory \cite{S}.

We give here a brief overview of the investigation of its possible
automorphism groups. We do not follow a chronological order, nor we
mention all the papers related to the topic. Our aim is to outline
all the steps necessary to prove the final theorem and to underline
where the methods presented in the previous sections can be applied.

For all this section let $\CC$ be an extremal self-dual $[72,36,16]$
code.

\subsection{Cycle-structure of the automorphisms}

In order to get information on the whole group $\Aut(\CC)$, we begin
to investigate the cycle-structure of the possible automorphisms.

John H. Conway and Vera Pless, in a paper submitted in 1979
\cite{Conway}, were the first who faced this problem. In particular
they focused on the possible automorphisms of odd prime order. They
proved that
\begin{itemize}
  \item \emph{only $9$ types of automorphism of odd prime order may occur in $\Aut(\CC)$,
  namely \
$23$-$(3,3)$, \ $17$-$(4,4)$, \ $11$-$(6,6)$, \ $7$-$(10,2)$, \
$5$-$(14,2)$, \ $3$-$(18,18)$, $3$-$(20,12)$, $3$-$(22,6)$ and
$3$-$(24,0)$.}
\end{itemize}
They used arguments based on combinatorial properties of the codes.

Between 1981 and 1987, V. Pless, John G. Thompson, W. Cary Huffman
and V.I. Yorgov \cite{Pless23,Pless17,Huffman11} proved that
\begin{itemize}
  \item \emph{automorphisms of orders $23$, $17$ and $11$ cannot occur in
$\Aut(\CC)$.}
\end{itemize}
Between 2002 and 2004, S. Bouyuklieva \cite{Bord2,Bord3} proved that
\begin{itemize}
  \item \emph{the eventual elements of order $2$ and $3$ in $\Aut(\CC)$ are fixed point free.}
\end{itemize}
More recently T. Feulner and G. Nebe \cite{FeulNe} showed that also
\begin{itemize}
  \item \emph{automorphisms of orders $7$ cannot occur in $\Aut(\CC)$.}
\end{itemize}
The techniques used are different case by case, but the main tool is
the decomposition of codes with an automorphism of odd prime order
discussed in Section \ref{sectionprime}. Let us summarized these
results in the following.

\begin{prop}\label{cyclestructure}
Let $\sigma$ be an automorphism of prime order of a self-dual
$[72,36,16]$ code. Then $\sigma$ can be only of the following types:
$2$-$(36,0)$, $3$-$(24,0)$ and $5$-$(14,2)$.
\end{prop}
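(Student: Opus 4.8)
The plan is to derive the statement purely by assembling the results recalled above; no genuinely new argument is needed. First I would settle the odd primes. By the theorem of Conway and Pless \cite{Conway} quoted above, an automorphism of odd prime order of a self-dual $[72,36,16]$ code must be of one of the nine types listed there, so in particular its order lies in $\{3,5,7,11,17,23\}$ and, for each such order, only the cycle types in that list can occur. I would then strike out types using the later work: the results of Pless, Thompson, Huffman and Yorgov exclude orders $23$, $17$ and $11$, killing $23$-$(3,3)$, $17$-$(4,4)$ and $11$-$(6,6)$; the theorem of Feulner and Nebe \cite{FeulNe} excludes order $7$, killing $7$-$(10,2)$; and Bouyuklieva's result that every element of order $3$ in $\Aut(\CC)$ is fixed point free kills the three types $3$-$(18,18)$, $3$-$(20,12)$ and $3$-$(22,6)$, in which $f>0$. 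What remains among the odd primes is then exactly $3$-$(24,0)$ and $5$-$(14,2)$.

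Next I would treat $p=2$. An involution $\sigma\in\Aut(\CC)$ has some type $2$-$(c,f)$ with $2c+f=72$; by Bouyuklieva's result that involutions in $\Aut(\CC)$ are fixed point free we get $f=0$, hence $c=36$, i.e.\ type $2$-$(36,0)$. Combining the two cases leaves precisely $2$-$(36,0)$, $3$-$(24,0)$ and $5$-$(14,2)$, which is the claim. Note that $5$-$(14,2)$ is eliminated by none of the cited results, so it genuinely survives at this stage.

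The substantive mathematics is hidden in the cited eliminations, and that is where the real obstacle lies. Each of them rests on the decomposition $\CC=\CC(\sigma)\oplus\EE(\sigma)$ of Section~\ref{sectionprime} together with Theorem~\ref{theoremodd}: for an extremal $\CC$, $\pi_\sigma(\CC(\sigma))$ is a self-dual code of length $c+f$ and $\varphi_p(\EE(\sigma)^\ast)$ is Hermitian self-dual, and demanding that the reconstructed $\CC$ still have minimum distance $16$ leaves so few possibilities that they can be ruled out one by one (in some cases with computer assistance). In this write-up all of that is imported, so the ``proof'' reduces to the sorting above; the single hardest ingredient to reprove from scratch would be the exclusion of order $7$ following \cite{FeulNe}, where $\sss(7)=3<6$ puts one outside the ``fundamental case'' $\sss(p)=p-1$, so the finer splitting of $\EE(\sigma)$ into several Hermitian pieces (the case $t>1$ sketched in Section~\ref{sectionprime}) has to be brought in.
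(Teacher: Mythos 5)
Your proposal is correct and matches the paper exactly: the paper offers no independent proof of Proposition~\ref{cyclestructure}, but presents it as a summary of the cited results (Conway--Pless for the nine admissible odd-prime types, Pless--Thompson--Huffman--Yorgov for excluding $23$, $17$, $11$, Feulner--Nebe for $7$, and Bouyuklieva for the fixed-point-freeness of elements of order $2$ and $3$), which is precisely the assembly you carry out. Your additional remarks on where the substantive work lies (and on $\sss(7)=3<6$ forcing the finer decomposition of $\EE(\sigma)$) are accurate but not needed for the statement as the paper uses it.
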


An immediate consequence of Proposition \ref{cyclestructure} is that
$\Aut(\CC)$ does not contain elements of order $15$, $16$, $25$ and
$27$. Furthermore, the possible composite orders are $4$, $6$, $8$,
$9$, $10$, $12$, $18$, $36$ and $72$.\\ G. Nebe, Nikolay Yankov and
the author \cite{Neven,Yankov9,Baut6}, excluded orders $10$, $9$ and
$6$, respectively. Order $10$ can be excluded just looking at the
automorphism groups of self-dual $[36,18,8]$ codes, classified in
\cite{Gaborit}, which are the projection of possible fixed codes of
involutions, and using Remark \ref{remarkxxxx}.  The methods used
for order $9$ are a refinement of those in Section
\ref{sectionprime}. For order $6$ we used strongly the results
contained in Section \ref{section2p}.

Finally, very recently V.I. Yorgov and Daniel Yorgov proved that
automorphism of order $4$ are not possible \cite{YY}. So we have the
following.

\begin{prop}\label{remarkprime}
Let $\sigma$ be a non-trivial automorphism of a self-dual
$[72,36,16]$ code. Then its order is a prime among $\{2,3,5\}$.
\end{prop}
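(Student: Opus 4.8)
The plan is to deduce Proposition \ref{remarkprime} as an essentially bookkeeping consequence of Proposition \ref{cyclestructure} together with the various exclusions of composite orders catalogued in the surrounding discussion. First I would observe that if $\CC$ is an extremal self-dual $[72,36,16]$ code and $\sigma\in\Aut(\CC)$ is non-trivial, it suffices to show that $\sigma$ has prime order, since then Proposition \ref{cyclestructure} pins that prime down to $\{2,3,5\}$. So the whole content is: $\Aut(\CC)$ has no element of composite order.

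Next I would enumerate the candidate composite orders. An element of composite order $m$ has, for every prime $q\mid m$, a power of exact order $q$; by Proposition \ref{cyclestructure} the only primes dividing $|\Aut(\CC)|$ are $2,3,5$ (any other prime $q$ would give an element of prime order $q\notin\{2,3,5\}$). Hence $m$ is of the form $2^a 3^b 5^c$. The type-restrictions in Proposition \ref{cyclestructure} already forbid $9$, $15$, $16$, $25$, $27$ outright (an element of order $9$ would have $\sigma^3$ of type $3$-$(24,0)$, but then $\sigma$ itself must permute the $24$ triples, and a refinement of the Section \ref{sectionprime} machinery — the argument of \cite{Yankov9} — shows this is impossible; similarly for the others), so $b\le 1$ and $c\le 1$, and one checks $a\le 3$ is forced by the exclusion of order $16$ in \cite{YY} (order $16$ would give an element of order $16$; higher powers of $2$ contain one). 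This leaves the finite list of composite orders $4,6,8,9,10,12,18,36,72$ mentioned in the text. I would then cite, one per line, the exclusions: order $9$ (\cite{Yankov9}), order $6$ (\cite{Baut6}, using Section \ref{section2p}), order $10$ (\cite{Neven}, via Remark \ref{remarkxxxx} and the classification \cite{Gaborit}), order $4$ (\cite{YY}); and finally note that each of the remaining orders $8,12,18,36,72$ is divisible by one of $4,9$ — indeed $8$ and $12$ and $36$ and $72$ are divisible by $4$, and $18,36,72$ are divisible by $9$ (here $18=2\cdot 9$) — so an element of any such order would have a power of order $4$ or $9$, contradiction.

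The main obstacle is not any new mathematics but making sure the reduction is airtight: one must verify that \emph{every} composite order that is not literally in the list $\{4,6,9,10\}$ has a power whose order \emph{is} in that list (or is one of the orders $15,16,25,27$ already killed by cycle types). Concretely: $8$ has the power of order $4$; $12$ has the power of order $4$; $18$ has the power of order $9$ (since $18=2\cdot 3^2$) and also a power of order $2$ and of order $3$; $36=4\cdot 9$ has powers of order $4$ and $9$; $72=8\cdot 9$ likewise. So the only composite orders needing a \emph{direct} exclusion are $4,6,9,10$, and these are exactly the four cited results. One should also double-check there is no composite order involving the prime $5$ other than $10$: orders $20,45,\dots$ are divisible by $4$ or $9$, and order $15$ is excluded by cycle types as noted in the text, while order $25$ is likewise excluded. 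With this casework laid out explicitly, the proposition follows, and the proof is essentially a one-paragraph assembly of the quoted facts.

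\begin{proof}
By Proposition \ref{cyclestructure}, every element of prime order in $\Aut(\CC)$ has order in $\{2,3,5\}$; in particular $|\Aut(\CC)|$ is divisible only by the primes $2,3$ and $5$, so every element has order of the form $2^a3^b5^c$. It therefore suffices to prove that $\Aut(\CC)$ contains no element of composite order. The type restrictions of Proposition \ref{cyclestructure} already exclude orders $9,15,16,25,27$: an element of such an order would have a power of prime order whose cycle type forces a structure contradicting the admissible types (the argument for order $9$ being the refinement of Section \ref{sectionprime} carried out in \cite{Yankov9}). Hence the only composite orders that could occur are $4,6,8,10,12,18,36,72$. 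Order $4$ is excluded in \cite{YY}, order $6$ in \cite{Baut6} (using the results of Section \ref{section2p}), and order $10$ in \cite{Neven} (via the classification of self-dual $[36,18,8]$ codes in \cite{Gaborit} and Remark \ref{remarkxxxx}). Finally, each of the orders $8,12,36,72$ is divisible by $4$, and $18$ is divisible by $9$; so an element of any of these orders would have a power of order $4$ or of order $9$, which is impossible. Therefore every non-trivial automorphism of $\CC$ has prime order, and by Proposition \ref{cyclestructure} that prime lies in $\{2,3,5\}$.
\end{proof}
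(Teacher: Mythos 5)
Your proof is correct and follows essentially the same route as the paper, which presents Proposition \ref{remarkprime} as a summary of Proposition \ref{cyclestructure} together with the cited exclusions of the composite orders $4$, $6$, $9$ and $10$, the remaining composite orders being killed because each has a power of order $4$ or $9$. The only (harmless) blemish, which the paper's own list of ``possible composite orders'' shares, is that your explicit enumeration omits $20$, $24$ and $40$ --- these are consistent with the prime-order cycle types but are all divisible by $4$, so your final divisibility step disposes of them just as it does of $8$, $12$, $36$ and $72$.
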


\subsection{Structure of the whole group}

Once we have information on the cycle-structure of the
automorphisms, we can investigate the structure of the whole
group.\\ By Proposition \ref{cyclestructure} we have immediately
that
$$|\Aut(\CC)|=2^a3^b5^c$$
where $a,b,c$ are nonnegative integers.

S. Bouyuklieva was the first, in 2004, who studied the order of
$\Aut(\CC)$. She proved \cite{Bord3} that
\begin{itemize}
  \item \emph{$25$ does not divide $|\Aut(\CC)|$.}
\end{itemize}
This means that $|\Aut(\CC)|=2^a3^b5^c$
with $a,b$ nonnegative integers and $c=0,1$. \\
If $c=1$ then
\begin{itemize}
  \item if $\sigma\in\Aut(\CC)$ has order $5$, $|\textnormal{N}_{\Aut(\CC)}(\sigma)|=2^d 5$, with $d=0,1$ \cite{Yorgov}.
  \item $|\{ \text{aut. of order} \ 5 \ \text{in} \ \Aut(\CC)\}|=4\cdot \frac{|\Aut(\CC)|}{2^\delta 5}$.
\end{itemize}
So, by Burnside Lemma,  $$\frac{1}{|\Aut(\CC)|}\left(72+\gamma \cdot
2\cdot
  \frac{4\cdot |\Aut(\CC)|}{2^\gamma5}\right)= \frac{72}{2^\alpha3^\beta5^\gamma}+\gamma\cdot \frac{8}{2^\delta 5} \ \ \in \N$$
$$\Downarrow$$
$$|\Aut(\CC)|\in \{ 1, 2, 3, 4, 5, 6, 8, 9, 10, 12, 18, 24, 30, 36,
60, 72, 180, 360 \} \ \ (\star).$$ By Proposition \ref{remarkprime},
we have that
\begin{itemize}
  \item $\Aut(\CC)$ is trivial or isomorphic to one of the following: $C_2$, $C_3$, $C_2\times C_2$, $C_5$, $S_3$, $C_2\times
  C_2\times C_2$, $C_3\times C_3$, $D_{10}$, $A_4$, $(C_3\times
  C_3)\rtimes C_2$ (the generalized dihedral group of order $18$) or
  $A_5$,
\end{itemize}
since all other groups of order in $(\star)$  have elements of
composite order (for a library of Small Groups see for example
\cite{SmallGroups}).\\
T. Feulner and G. Nebe \cite{FeulNe} proved that
\begin{itemize}
\item $\Aut(\CC)$ does not contain a subgroup isomorphic to $C_3\times C_3$ or $
D_{10}$.
\end{itemize}
The author, in a joint paper \cite{BDN} with F. Dalla Volta and G.
Nebe, proved that
\begin{itemize}
\item $\Aut(\CC)$ does not contain a subgroup isomorphic to
$S_3$ or $A_4$.
\end{itemize}
Finally, the author proved \cite{BE8} that
\begin{itemize}
\item $\Aut(\CC)$ does not contain a subgroup isomorphic to
$C_2\times C_2\times C_2$.
\end{itemize}
The methods used for $C_3\times C_3$ are a refinement of those
presented in Section \ref{sectionprime}. The cases of $D_{10}$ and
$S_3$ involve the methods of Section \ref{sectiondihedral}. For
$A_4$ and $C_2\times C_2\times C_2$ we applied the methods of
Section \ref{sectioninter} with some more particular observations.

Let us summarize all these results in a theorem.

\begin{theorem}
Let $\CC$ be self-dual $[72,36,16]$ code. Then $\Aut(\CC)$ is
trivial or isomorphic to $C_2$, $C_3$, $C_2\times C_2$ or $C_5$.
\end{theorem}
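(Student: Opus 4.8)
The plan is to assemble the final theorem from the long chain of results already surveyed in this section, so the ``proof'' is essentially a bookkeeping argument that tracks which isomorphism types survive each exclusion. First I would recall from Proposition \ref{cyclestructure} that any automorphism of prime order has type $2$-$(36,0)$, $3$-$(24,0)$ or $5$-$(14,2)$, so in particular $|\Aut(\CC)| = 2^a 3^b 5^c$ and, by Bouyuklieva's result, $c \in \{0,1\}$. Then I would reproduce the Burnside-lemma computation exactly as displayed above: combining the normalizer bound $|\textnormal{N}_{\Aut(\CC)}(\sigma)| = 2^d 5$ for an order-$5$ element with the count of order-$5$ elements, the integrality constraint forces $|\Aut(\CC)|$ into the explicit finite list $(\star)$.

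Next I would invoke Proposition \ref{remarkprime}: every nontrivial automorphism has prime order in $\{2,3,5\}$, so $\Aut(\CC)$ has no element of composite order. Crossing this against the list $(\star)$ and consulting the small-groups classification leaves precisely the eleven candidate isomorphism types listed in the excerpt: $C_2$, $C_3$, $C_2 \times C_2$, $C_5$, $S_3$, $C_2 \times C_2 \times C_2$, $C_3 \times C_3$, $D_{10}$, $A_4$, $(C_3 \times C_3) \rtimes C_2$, and $A_5$ (together with the trivial group). The remaining work is pure elimination of the ``large'' candidates. From Feulner--Nebe I would remove $C_3 \times C_3$ and $D_{10}$; from \cite{BDN} I would remove $S_3$ and $A_4$; from \cite{BE8} I would remove $C_2 \times C_2 \times C_2$. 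Each of these exclusions, being already theorems, may be cited directly. Finally I would observe that the three remaining candidates $(C_3 \times C_3) \rtimes C_2$ and $A_5$ each \emph{contain} a subgroup isomorphic to one of the already-excluded groups --- $(C_3 \times C_3) \rtimes C_2$ contains $C_3 \times C_3$, and $A_5$ contains (for instance) $S_3$ as well as $D_{10}$ --- so they are ruled out as well. What survives is exactly the trivial group, $C_2$, $C_3$, $C_2 \times C_2$ and $C_5$, which is the assertion.

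The genuinely hard mathematics --- the type restrictions of Proposition \ref{cyclestructure}, the order-$4$ exclusion of \cite{YY}, and above all the three elimination theorems for $C_3 \times C_3$, $D_{10}$, $S_3$, $A_4$ and $C_2^3$ --- is entirely offloaded onto the cited results, each of which in turn rests on the module-theoretic machinery of Sections \ref{sectionprime}--\ref{sectioninter}. So within the scope of this proof the main (and only real) obstacle is making sure the subgroup-containment closure argument is airtight: one must check that \emph{every} group on the list $(\star)$ that is not among the five survivors either has an element of composite order or contains one of the five explicitly excluded subgroups. This is a finite check on a short list of small groups and presents no difficulty, but it is the one step where a careless omission would leave a gap. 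I would present it by exhibiting, for each of the eliminated candidates, either the offending composite-order element or an explicit embedded copy of $C_3 \times C_3$, $D_{10}$, $S_3$, $A_4$ or $C_2^3$, and then conclude.
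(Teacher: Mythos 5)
Your proposal is correct and follows essentially the same route as the paper, which likewise obtains the theorem by combining the Burnside-lemma list $(\star)$ with Proposition \ref{remarkprime}, the small-groups classification, and the cited exclusions of $C_3\times C_3$, $D_{10}$, $S_3$, $A_4$ and $C_2\times C_2\times C_2$; your explicit remark that $(C_3\times C_3)\rtimes C_2$ and $A_5$ fall because they contain already-excluded subgroups is exactly the (implicit) final step of the paper. The only blemish is the harmless miscount ``three remaining candidates'' where you list two.
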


\begin{remark}
The possible automorphism groups of a putative extremal self-dual
code of length $72$ are abelian and very small. So this code is
almost a rigid object (i.e. without symmetries) and it might be very
difficult to find it, if it exists.
\end{remark}

\bibliographystyle{amsplain}

\end{document}